\begin{document}
\title{Arbitrary Pattern Formation on Infinite Grid by Asynchronous Oblivious Robots \thanks{This is the full version of the paper, with the same title and authors, that was accepted in the 13th International Conference and Workshops on Algorithms and Computation (WALCOM 2019), February 27 - March 02, 2019, Guwahati, India}}
%
%
\author{Kaustav Bose\orcidID{0000-0003-3579-1941} \and
Ranendu	Adhikary\orcidID{0000-0002-9473-2645} \and
Manash Kumar Kundu\orcidID{0000-0003-4179-8293} \and
Buddhadeb Sau}
%
%
\institute{Department of Mathematics, Jadavpur University, Kolkata, India\\
\email{\{kaustavbose.rs, ranenduadhikary.rs, manashkrkundu.rs \}@jadavpuruniversity.in, bsau@math.jdvu.ac.in}
}
\maketitle                 

\begin{abstract}

The \textsc{Arbitrary Pattern Formation} problem asks to design a distributed algorithm that allows a set of autonomous mobile robots to form any specific but arbitrary geometric pattern given as input. The problem has been extensively studied in literature in continuous domains. This paper investigates a discrete version of the problem where the robots are operating on a two dimensional infinite grid. The robots are assumed to be autonomous, identical, anonymous and oblivious. They operate in Look-Compute-Move cycles under a fully asynchronous scheduler. The robots do not agree on any common global coordinate system or chirality. We have shown that a set of robots can form any arbitrary pattern, if their starting configuration is asymmetric.


\keywords{Distributed algorithm \and
Autonomous robots \and
Arbitrary Pattern Formation \and
Grid \and
Asynchronous \and
Look-Compute-Move cycle.}
\end{abstract}
\section{Introduction}

\subsection{Motivation}

Distributed coordination of autonomous mobile robot systems has attracted considerable research interest in recent years owing to its potential applications in a wide range of real-world problems. The problem of forming an arbitrary geometric pattern is a fundamental coordination task for autonomous robot swarms. The pattern formation problem has been extensively investigated in continuous domains under different assumptions. In the continuous setting, the robots are assumed to be able to execute accurate movements in arbitrary directions and by arbitrarily small amounts. Hence, even in densely crowded situations, the robots can maneuver avoiding collisions. Certain models also permit the robots to move along curved trajectories, in particular, circumference of a circle. For robots with weak mechanical capabilities, it may not be possible to execute such intricate movements with precision. This motivates us to consider the problem in a grid based terrain where the movements of the robots are restricted only along grid lines and only by a unit distance in each step. Grid type floor layouts can be easily implemented in real life robot navigation systems using magnets or optical guidances.

\subsection{Earlier Works}

The \textsc{Arbitrary Pattern Formation} problem was first studied by Suzuki and Yamashita \cite{Yamashita96,Yamashita99} on the Euclidean plane. In these papers, a complete characterization of the class of formable patterns has been provided for autonomous and anonymous robots with an unbounded amount of memory. They characterized the class of formable patterns by using the notion of \emph{symmetricity} which is essentially the order of the cyclic group that acts on the initial configuration. In \cite{Flocchini08}, Flocchini et. al. investigated the solvability of the problem for fully asynchronous and oblivious robots. Initially, the robots are in arbitrary positions, with the only requirement that no two robots are in the same position. They showed that if the robots have no common agreement on coordinate system, they cannot form an arbitrary pattern. If the robots have one-axis agreement, then any odd number of robots can form an arbitrary pattern, but an even number of robots cannot, in the worst case. If the robots agree on both $X$ and $Y$ axes, then any set of robots can form any pattern. They also proved that it is possible to elect a leader for $n \geq 3$ robots if it is possible to form any pattern. In \cite{Petit09,Dieudonne10}, the authors studied the relationship between \textsc{Arbitrary Pattern Formation} and \textsc{Leader election} among robots in asynchronous scheduler. They provided algorithms to form an arbitrary pattern starting from any geometric configuration wherein the leader election is possible. More precisely, their solutions work for four or more robots with chirality and for at least five robots without chirality. Combined with the result in \cite{Flocchini08}, they deduced that \textsc{Arbitrary Pattern Formation} and \textsc{Leader election} are equivalent, i.e., it is possible to solve \textsc{Arbitrary Pattern Formation} for $n \geq 4$ with chirality (resp. $n \geq 5$ without chirality) if and only if \textsc{Leader election} is solvable. While all the previous works considered robots with unlimited visibility,  Yamauchi et. al. \cite{Yamauchi13} first studied the problem with limited visibility. Randomized pattern formation algorithms were studied in \cite{Bramas16,Yamauchi14}. In \cite{Das15}, Das et al. investigated the problem of forming a sequence of patterns in a given order. In \cite{Cicerone17,Fujinaga15}, the problem was studied allowing the pattern to have multiplicities. In \cite{cicerone2018embedded,Fujinaga15} the so-called \textsc{Embedded Pattern Formation} problem was studied where the pattern to be formed is provided as a set of visible points in the plane. Recently in \cite{Lukovszki14}, the pattern formation problem was studied on an infinite grid for robots with limited visibility. The problem was studied in synchronous setting for robots with constant size memory, and having a common coordinate system. Furthermore, robots were given a fixed point on the grid so that they can form a connected configuration containing it. Other specific types of formation problems that have been studied in the infinite grid set up, are the \emph{Gathering} problem \cite{Stefano17}, i.e., the point formation problem and the \emph{Mutual Visibility }problem \cite{Adhikary18}, where a set of opaque robots have to form a pattern in which no three robots are collinear.

The paper is organized as follows. In Section \ref{sec2}, some basic definitions and a formal description of the model and the problem are presented. In Section \ref{sec3}, we present and solve a preliminary problem that will be used in the main algorithm, which is described with formal proof in Section \ref{secmain}.

\section{Model and Definition}\label{sec2}

\subsection{The Model}\label{assum}

\quad \enspace \textbf{Robots:} The robots are \emph{autonomous} (there is no central control), \emph{homogeneous} (they execute the same distributed algorithm), \emph{anonymous} (they have no unique identifiers), \emph{identical} (they are indistinguishable by their appearance) and \emph{oblivious} (they have no memory of past configurations and previous actions). The robots cannot explicitly communicate with each other. The robots have \emph{global visibility} which means that they can observe the entire grid and the positions of all the robots. The robots do not have access to any common global coordinate system. In particular, they do not have a common notion of direction or chirality. Each robot has its own local view of the world with respect to its local Cartesian coordinate system. All the robots are initially positioned on distinct grid points.

\textbf{Movement:} The movement of the robots are restricted only along grid lines from one grid point to one of its four neighboring grid points. Traditionally in discrete domains, robot movements are assumed to be instantaneous. For simplicity of analysis, we also assume the movements to be instantaneous. This implies that the robots are always seen on grid points, not on edges. However, our strategy will also work without this assumption (by asking the robots to wait i.e, do nothing, if they see a robot on a grid edge).

\textbf{Look-Compute-Move cycles:} The robots, when active, operate according to the so-called \textsc{Look-Compute-Move} cycle. In each cycle, a previously idle or inactive robot wakes up and executes
the following steps. In the \textsc{Look} phase, the robot takes the snapshot of the positions of all the robots, represented in its own local coordinate system. Based on the perceived configuration, the robot performs computations according to a deterministic algorithm to decide whether to stay put or to move to an adjacent grid point. Based on the outcome of the algorithm, the robot either remains stationary or makes an instantaneous move to an adjacent grid point.

\textbf{Scheduler:} We assume that the robots are controlled by a fully asynchronous adversarial scheduler (\emph{ASYNC}). This implies that the amount of time spent in \textsc{Look}, \textsc{Compute}, \textsc{Move} and inactive states by different robots is finite but unbounded and unpredictable. As a result, the robots do not have a common notion of time and the configuration perceived by a robot during the \textsc{Look} phase may significantly change before it actually makes a move.

\subsection{Basic Geometric Definitions}

Consider a team of a finite number of robots placed on the vertices of a simple undirected connected graph $G=(V,E)$. Define a function $f:V \longrightarrow \mathbb{N} \cup \{0\}$, where $f(v)$ is the number of robots on the vertex $v$\footnote{Since we have assumed that the robots are initially positioned on distinct grid points and our algorithm guarantees collisionless movements, $f(v)$ is always either $0$ or $1$.}. The pair $(G,f)$ is called a \emph{configuration of robots on} $G$, or simply a \emph{configuration}. Given a configuration of robots $\mathcal{C}$, let $\mathcal{R}$ denote the smallest grid-aligned rectangle that contains all the robots.

An \emph{automorphism} of a graph $G = (V, E)$ is a bijection $\varphi : V \longrightarrow V$ such that for all $u, v \in V$, $u, v$ are adjacent if and only if $\varphi(u), \varphi(v)$ are adjacent. The set of all automorphisms of $G$ forms a group, called the \emph{automorphism group} of G and is denoted by $Aut(G)$. The definition of automorphism of graphs can be extended to robot configurations on graphs. An \emph{automorphism of a configuration} $(G,{f})$ is an automorphism $\varphi$ of $G$ such that ${f}(v)= {f}(\varphi(v))$ for all $v \in V$. The set of all automorphisms of $(G,{f})$ also forms a group that will be denoted by $Aut(G,{f})$. We shall refer to an automorphism of a configuration as a \emph{symmetry}. We shall call a symmetry \emph{trivial} if $\varphi(v) = v$, for all $v \in V$ with $f(v) \neq 0$. If a configuration admits no non-trivial symmetries, then it is called an \emph{asymmetric configuration}, and otherwise, a \emph{symmetric configuration}.

An infinite path is the graph $P = (\mathbb{Z}, E)$, where $E = \{ (i , i + 1) \mid i \in \mathbb{Z}\}$. An infinite grid can be defined as the Cartesian product $G = P \times P$. Assume that the infinite grid $G$ is embedded in the Cartesian plane $\mathbb{R}^2$. It is not difficult to see that $Aut(G)$ is generated by three types of automorphisms: translations, reflections and rotations. A translation shifts all the vertices of $G$ by the same amount. Since a configuration $(G,{f})$ has only finite number of robots, it is not difficult to see that $Aut(G,{f})$ has no translations. Reflections are defined by an axis of reflection. The axis can be horizontal or vertical or diagonal. The angle of rotation can be of 90 or 180 degrees, and the center of a rotation can be a vertex, or the center of an edge, or the center of the unit square.

The solvability of the arbitrary pattern formation problem depends on the symmetries of the initial configuration of the robots. This paper exclusively considers only asymmetric initial configurations. Some impossibility results regarding symmetric configurations are briefly discussed in Section \ref{conclu}.

\subsection{The Arbitrary Pattern Formation Problem}

A swarm of $k$ robots is arbitrarily deployed on the vertices of the infinite grid. We assume that the initial configuration $\mathcal{C}_{init}$ is asymmetric, and no two robots are in the same position. The goal of the \textsc{Arbitrary Pattern Formation} problem is to design a distributed algorithm that guides the robots to form an arbitrary geometric pattern $\mathcal{C}_{target}$. The pattern $\mathcal{C}_{target}$ is a set of $k$ (distinct) vertices in the grid given in an arbitrary Cartesian coordinate system. The pattern  $\mathcal{C}_{target}$ is given to all robots in the system as input. Due to absence of a common global coordinate system, the robots decide that the pattern is formed
when their present configuration becomes `similar' to $\mathcal{C}_{target}$ with respect to translations, rotations,
reflections. We say that a pattern formation algorithm is \emph{collision-free}, if, at any time $t$, there are no two robots that occupy the same grid point. Avoiding collisions is a necessary requirement of the problem under this model. This is because, if two robots at any point in time, occupy the same grid point, they can not be deterministically separated thereafter, as they both execute the same deterministic algorithm.

\section{Pattern Formation on a Finite Grid}\label{sec3}

In this section, we will discuss a related problem that will be used in the main algorithm. Consider a set of $k$ robots deployed on an $m \times n$ finite grid. Starting from any arbitrary (symmetric or asymmetric) configuration, they are required to form a given arbitrary pattern. Unlike our original problem, we assume that the robots agree on a common global coordinate system. The input $\mathcal{C}_{target}$ is also given in this coordinate system. Hence, the given input corresponds to a fixed set $\mathcal{T}$ of $k$ grid points on the grid and our problem is to place a robot on each of these grid points. All the other assumptions from our original problem, stated in Section \ref{assum}, are retained. 

We shall first consider the case where $m = 1$, i.e., the grid is just a discretized line segment. Since the robots have a common global coordinate system, they agree on left and right. Hence, in the starting configuration, the robots can be labeled as $r_1, \ldots, r_k$ from left to right. If we can devise a swap-free (the act of two adjacent robots exchanging their positions is called a swap) and collision-free movement strategy, then the labels will remain unchanged throughout the algorithm. Note that, in the asynchronous setting, a collision-free algorithm is necessarily swap-free. We can also label the grid points in $\mathcal{T}$ as $t_1, \ldots, t_k$ from left to right. Our strategy is to simply ask each $r_i$ to go to $t_i$. In order to avoid collisions, a robot will move to an adjacent grid point only if it is empty. A pseudocode description of the strategy is given in Algorithm \ref{gather}. 

\begin{figure}[h]
  \centering
  \begin{minipage}{.7\linewidth}
  \begin{algorithm}[H]
    \setstretch{0.2}
    \SetKwInOut{Input}{Input}
    \SetKwInOut{Output}{Output}
    \SetKwProg{Fn}{Function}{}{}
    \SetKwProg{Pr}{Procedure}{}{}

    \Pr{\textsc{PFonPath()}}{
    
    $s \in \{left, right\}$
    
    $r_i \leftarrow$ me
    
    \If{I am not at $t_i$}{
    
	\If{$t_i$ is on my $s$}{
	
	    $u \leftarrow$ the adjacent grid point on my $s$
	    
	    \If{$u$ is empty}{Move to $u$}
	
	    }
      }
    }

\caption{\textbf{Pattern formation on a $1 \times n$ grid}}
    \label{gather} 
\end{algorithm}

\end{minipage}
\end{figure}

\begin{theorem}\label{lineth}
 Algorithm \textsc{PFonPath()} is correct.
\end{theorem}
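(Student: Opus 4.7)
Since the robots share a global coordinate system, I can consistently label them $r_1,\ldots,r_k$ from left to right at the outset, and the target cells as $t_1<t_2<\cdots<t_k$. My plan is to establish three properties in order: (i) the order-preservation (and hence collision-freeness) invariant, (ii) the monotonicity of each robot's motion towards its fixed target, and (iii) a progress lemma guaranteeing termination in the correct configuration.

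The load-bearing step is the order-preservation invariant: at every time $t$, $\mathrm{pos}(r_1) < \mathrm{pos}(r_2) < \cdots < \mathrm{pos}(r_k)$. This simultaneously rules out collisions and lets each robot correctly recover its own index $i$ from the snapshot by counting occupied cells to its left, so the assignment $r_i \leftarrow$ me in the pseudocode is well defined. I would prove the invariant by induction on the sequence of \textsc{Move} events. The delicate case is the asynchronous Look-Compute-Move window: $r_i$ may look at time $t'$, find the cell towards its target empty, and actually execute the move at a strictly later time $t$, during which other robots can have acted. The crucial observation is that any hypothetical collision at $r_i$'s move must involve $r_{i+1}$ (or symmetrically $r_{i-1}$) by the inductive order, and such a collision forces both $t_i \geq \mathrm{pos}(r_i)+1$ (because $r_i$ decided to move right) and $t_{i+1} \leq \mathrm{pos}(r_i)+1$ (because $r_{i+1}$ either arrived there by leftward motion, which fixes its direction towards a target $\leq \mathrm{pos}(r_i)+1$, or already sits at its target at that cell). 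Combined with $t_i < t_{i+1}$ this is a contradiction. Property (ii) then comes for free, since each robot moves only towards its fixed $t_i$, so the direction of its motion is determined once and for all.

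For termination I would use the potential $\Phi = \sum_{i=1}^{k}|\mathrm{pos}(r_i)-t_i|$, which drops by exactly one at every successful \textsc{Move}, so it suffices to show that in any non-final configuration some robot is able to move; fairness of the scheduler will then force $\Phi$ to decrease. Let $r_{j^*}$ be the rightmost out-of-place robot. If $t_{j^*} > \mathrm{pos}(r_{j^*})$, then every $r_h$ with $h > j^*$ already sits at $t_h \geq t_{j^*}+1 \geq \mathrm{pos}(r_{j^*})+2$, so the cell immediately to the right of $r_{j^*}$ is empty and $r_{j^*}$ can move; the symmetric statement handles the leftmost out-of-place robot wanting to go left. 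The only remaining subcase is that the leftmost out-of-place $r_{i^*}$ wants to go right while $r_{j^*}$ wants to go left: here I would trace the maximal block $r_{i^*}, r_{i^*+1}, \ldots, r_m$ of out-of-place robots occupying consecutive cells and all wanting to go right, and observe that if $r_{m+1}$ were at $\mathrm{pos}(r_m)+1$ the same $t_i<t_{i+1}$ argument would force $r_{m+1}$ to be out of place and also want right, contradicting maximality; hence $r_m$ has an empty right neighbour and can move. Finiteness of $\Phi$ then yields termination, and at termination every $r_i$ sits at $t_i$, so the configuration coincides with $\mathcal{T}$.

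The main technical obstacle is clearly the asynchrony: the emptiness test happens during Look, not during the actual Move, and therefore the proof of order preservation must be threaded through an induction on global events rather than read off from the local check alone. The contradiction sketched above for the invariant is precisely what rules out the bad window; once that invariant is secured, monotonic motion, the potential argument, and correctness at termination all fall out cleanly.
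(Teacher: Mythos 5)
Your proof is correct and follows essentially the same route as the paper's: the identical left-to-right labeling of robots and targets, and the same central contradiction ($t_i \geq \mathrm{pos}(r_i)+1 \geq t_{i+1}$ versus $t_i < t_{i+1}$ for an adjacent pair), used once to rule out collisions and once to rule out blocking. The extra machinery you supply --- the event-by-event induction through the asynchronous Look--Move window, the index-recovery remark for oblivious robots, the potential $\Phi$ and the maximal-block progress lemma --- makes rigorous what the paper's two-paragraph proof (collision impossible, deadlock impossible, both by the same adjacent-pair argument) states only tersely, but it is a completion of the same argument rather than a different one.
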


\begin{proof}
 
 We first show that the algorithm does not lead to a collision. A collision can only occur in a situation shown in Fig. \ref{proof}, where the algorithm asks both $r_{i}$ and $r_{i+1}$ to move to $B$. This implies that $t_{i}$ is somewhere on the right of $A$, while $t_{i+1}$ is on the left of $C$. But this is impossible, since $t_{i}$ and $t_{i+1}$ are two distinct grid points with $t_{i+1}$ on the right of $t_{i}$.
 
 It remains to show that following this strategy, each $r_i$ will be able to reach $t_{i}$. The algorithm can only fail if a deadlock is created. This can only happen if we have two robots, $r_{i}$ and $r_{i+1}$, adjacent to each other, where $r_{i}$ wants to move towards right and $r_{i+1}$ is either at $t_{i+1}$ or wants to move towards left. Again this is impossible by the same arguments as earlier. \qed

\end{proof}

  \begin{figure}[h]
\centering
\subcaptionbox[Short Subcaption]{
      \label{proof}
}
[
    0.49\textwidth 
]
{
    \fontsize{9pt}{9pt}\selectfont
    \def\svgwidth{0.40\textwidth}
    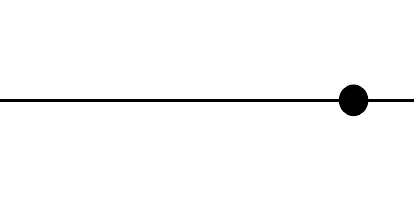
}
\hfill 
\subcaptionbox[Short Subcaption]{
     \label{coil}
}
[
    0.40\textwidth 
]
{
    \fontsize{9pt}{9pt}\selectfont
    \def\svgwidth{0.40\textwidth}
    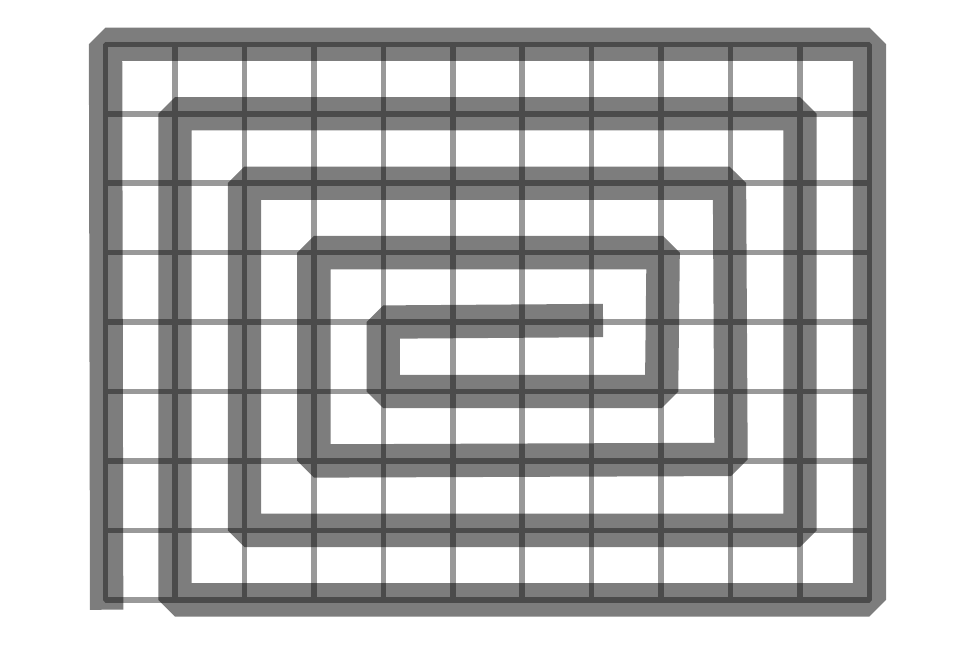
}

\caption[Short Caption]{a) Illustration supporting the proof of Theorem \ref{lineth}. b) A coiled up path in a finite grid.}

\end{figure}

 Now we consider the general $m \times n$ finite grid. An $m \times n$ finite grid can be seen as a coiled up path as shown in Fig. \ref{coil}. To be precise, an $m \times n$ grid has a spanning subgraph isomorphic to the finite path $P_{mn}$. But there are many such spanning subgraphs. The common global coordinate system allows the robots to agree on a particular subgraph as shown in Fig. \ref{coil}. Hence, pattern formation on a finite grid reduces to pattern formation on a path, which can be solved by algorithm \textsc{PFonPath()}.

\section{The Main Algorithm}\label{secmain}

\begin{floatingfigure}[r]{6cm}
   
   \fontsize{9pt}{9pt}\selectfont
   \def\svgwidth{0.4\textwidth}
\begingroup%
  \makeatletter%
  \providecommand\color[2][]{%
    \errmessage{(Inkscape) Color is used for the text in Inkscape, but the package 'color.sty' is not loaded}%
    \renewcommand\color[2][]{}%
  }%
  \providecommand\transparent[1]{%
    \errmessage{(Inkscape) Transparency is used (non-zero) for the text in Inkscape, but the package 'transparent.sty' is not loaded}%
    \renewcommand\transparent[1]{}%
  }%
  \providecommand\rotatebox[2]{#2}%
  \ifx\svgwidth\undefined%
    \setlength{\unitlength}{201.25984252bp}%
    \ifx\svgscale\undefined%
      \relax%
    \else%
      \setlength{\unitlength}{\unitlength * \real{\svgscale}}%
    \fi%
  \else%
    \setlength{\unitlength}{\svgwidth}%
  \fi%
  \global\let\svgwidth\undefined%
  \global\let\svgscale\undefined%
  \makeatother%
  \begin{picture}(1,0.85915493)%
    \put(0,0){\includegraphics[width=\unitlength,page=1]{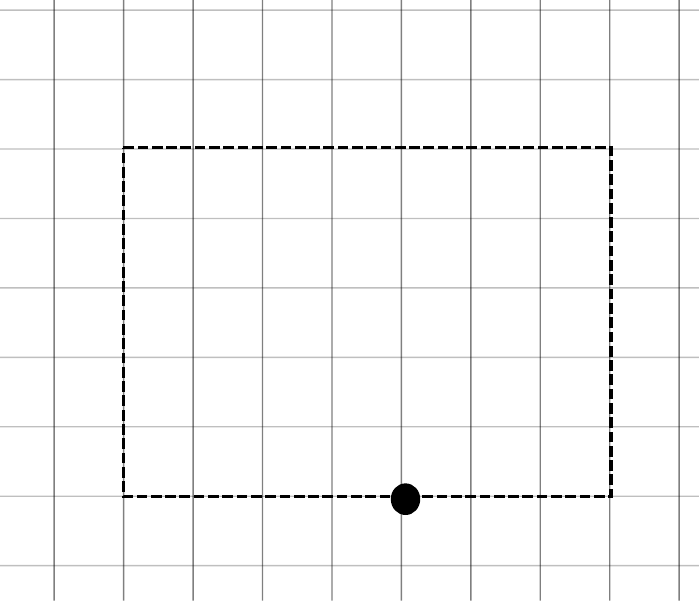}}%
    \put(0.0383405,0.29013368){\color[rgb]{0,0,0}\makebox(0,0)[lb]{\smash{$r_1$}}}%
    \put(0.88256938,0.28872925){\color[rgb]{0,0,0}\makebox(0,0)[lb]{\smash{$r_2$}}}%
    \put(0,0){\includegraphics[width=\unitlength,page=2]{lex_example.pdf}}%
    \put(0.09924162,0.67097858){\color[rgb]{0,0,0}\makebox(0,0)[lb]{\smash{$D$}}}%
    \put(0.87274311,0.68011036){\color[rgb]{0,0,0}\makebox(0,0)[lb]{\smash{$C$}}}%
    \put(0.87941435,0.09526253){\color[rgb]{0,0,0}\makebox(0,0)[lb]{\smash{$B$}}}%
    \put(0.098876,0.10226735){\color[rgb]{0,0,0}\makebox(0,0)[lb]{\smash{$A$}}}%
  \end{picture}%
\endgroup%

   \caption{In this configuration, the lexicographically largest string is $\lambda_{AD} =$ $0101000010001000000001011000010100$ $00000010001000$. The head and the tail are respectively $r_1$ and $r_2$. }
   \label{lex_example}
   \end{floatingfigure}

Consider a configuration $\mathcal{C}$, where $\mathcal{R} = ABCD$ is an $m \times n$ rectangle with $|AB| = n \geq m = |AD|$ (See Fig. \ref{lex_example}). Here, the size of a side is defined as the number of grid points on it. If all robots in $\mathcal{C}$ lie on one grid line, then $\mathcal{R}$ is just a line segment. In this case, when we say $\mathcal{R} = ABCD$, it is to be regarded as a $1 \times n$ `rectangle' with $A=D$, $B=C$ and $|AD| = |BC| = 1$. Let us first assume that $ABCD$ is a non-square rectangle with $|AB| = n > m = |AD| > 1$. We associate a binary string of length $mn$ to each corner of $\mathcal{R}$. The binary string associated to a corner $A$ is defined as follows. Scan the grid from $A$ along the shorter side $AD$ to $D$ and sequentially all grid lines parallel to $AD$ in the same direction. For each grid point, put a $0$ or $1$ according to whether it is empty or occupied. We denote this string by $\lambda_{AD}$. The three other strings $\lambda_{BC}$, $\lambda_{CB}$ and $\lambda_{DA}$ are defined similarly. If $ABCD$ is a square, i.e., $m = n$, then we have to associate two strings to each corner. In that case, the two sequences associated with $A$ will be denoted by $\lambda_{AD}$ and $\lambda_{AB}$. If any two of these strings are equal, then it implies that the configuration has a (reflectional or rotational) symmetry. Hence, if the configuration is asymmetric, then all the strings are distinct and we can find a unique lexicographically largest string. Assume that $\lambda_{AD}$ is the lexicographically largest string. Then $A$ will be called the \emph{leading corner}. Once we have the unique lexicographically largest string $\lambda_{AD}$, the robots can agree on a common coordinate system as follows. The leading corner $A$ is taken as origin and $X$-axis $= \overrightarrow{AB}$, $Y$-axis $= \overrightarrow{AD}$. Unless mentioned otherwise, any asymmetric configuration $\mathcal{C}$ will be expressed in this coordinate system. In the case where $m=1$,  $\lambda_{AD}$ and $\lambda_{DA}$ essentially refers to the same string. Hence, in this case, we have only two binary strings to compare. Again, if they are equal then the configuration is symmetric. Hence, if the configuration is asymmetric \footnote{In the $m=1$ case, we already have a reflectional symmetry with respect to $\overleftrightarrow{AB}$. But this is a trivial symmetry, and is to be ignored by our definition of asymmetric configurations.}, then we shall have a leading corner, say $A = D$. Then $A$ will be taken as origin and $X$-axis $= \overrightarrow{AB}$. But there will be no agreement on the $Y$-axis. However, as all the robots lie on the $X$-axis, the points in $\mathcal{C}$ can still be unambiguously expressed in coordinates.

Therefore, we see that in an asymmetric configuration, all the robots can agree on a common global coordinate system. By `up' (resp. `right') and `down' (resp. `left'), we shall refer to the positive and negative direction of $Y$ (resp. $X$) axis of this coordinate system respectively. Also, given any asymmetric configuration $\mathcal{C}$, the robots corresponding to the first and the last 1 in the lexicographically largest string, will be called the \emph{head} and the \emph{tail} respectively. The remaining robots will be called \emph{interior} robots. $\mathcal{C}'$ and $\mathcal{C}''$ will denote the sets $\mathcal{C} \setminus \{$tail$\}$ and $\mathcal{C} \setminus \{$head, tail$\}$ respectively.

The configuration $\mathcal{C}_{target}$, given to the robots as an input, is expressed in some arbitrary coordinate system. We can take the smallest enclosing rectangle of $\mathcal{C}_{target}$, call it $\mathcal{R}_{target}$. Assume that $\mathcal{R}_{target}$ is an $M \times N$ rectangle, with $N \geq M$. Now associate binary strings to its corners in the same manner as we did for $\mathcal{R}$. We shall assume that $\mathcal{C}_{target}$ is expressed in a coordinate system where the origin is the leading corner and  the positive $Y$ axis is along the side corresponding to the lexicographically largest string. No generality is lost, as the robots can always perform such a coordinate transformation on the input. However, unlike the previous case, we may not have a unique lexicographically largest string. This is because, the configuration $\mathcal{C}_{target}$ can have symmetries. In that case, for the coordinate transformation, we have to choose one among the largest strings to define the coordinate system. Notice that any choice leads to the same set of values. Therefore, in general, we shall assume that the origin of the coordinate system of $\mathcal{C}_{target}$ is \emph{one of the} leading corners, and the positive $Y$ axis is along the side corresponding to \emph{one of the} lexicographically largest strings. We shall call this coordinate system the \emph{canonical coordinate system}. Given $\mathcal{C}_{target}$ in the canonical coordinate system, we define $h_{target}, t_{target} \in \mathcal{C}_{target}$ as the points, corresponding to the first and the last 1 of the binary string that starts from the origin and goes along the $Y$ axis respectively. Also, define $\mathcal{C}_{target}'$ $= \mathcal{C}_{target} \setminus \{t_{target}\}$ and $\mathcal{C}_{target}''$ $= \mathcal{C} \setminus \{h_{target}, t_{target}\}$.

\begin{figure}

\begin{tabular}{| m{1cm} | m{11cm}|} 
\hline
$C_0$ & $\mathcal{C} = \mathcal{C}_{target}$ \\ 
\hline
$C_1$ & $\mathcal{C'} = \mathcal{C'}_{target}$  \\ 
\hline
$C_2$ & $Y$-coordinate of the tail in $\mathcal{C}$ = $Y$-coordinate of $t_{target}$ in $\mathcal{C}_{target}$ \\ 
\hline
$C_3$ & $n \geq $max$\{M, m\} + 2$  \\ 
\hline
$C_4$ & $n \geq 2\cdot$max$\{N, H\}$, where $H$ is the length of the horizontal side of the smallest enclosing rectangle of $\mathcal{C'}$  \\ 
\hline
$C_5$ & The head in $\mathcal{C}$ is at the origin  \\ 
\hline
$C_6$ & $m \geq $ max$\{M, V\} + 1$, where $V$ is the length of the vertical side of the smallest enclosing rectangle of $\mathcal{C'}$  \\ 
\hline
$C_7$ & $\mathcal{C''} = \mathcal{C''}_{target}$  \\ 
\hline
$C_8$ & $\mathcal{C}'$ has a non-trivial reflectional symmetry with respect to a horizontal line \\ 
\hline 
\end{tabular}

\caption{The Boolean variable on the left is true if and only if the condition on the right is satisfied.}
\label{varbool}
\end{figure}

 We can logically divide the algorithm into seven phases. The starting configuration of the robots can fall into any one of the phases. These phases will be described in detail in the following subsections. Since the robots are oblivious, in each \textsc{Look-Compute-Move} cycle, it has to infer from the perceived configuration, which phase it is currently in. It does so by checking if certain conditions are fulfilled or not. These conditions can be expressed in terms of Boolean variables listed in Fig. \ref{varbool}.

The main algorithmic difficulty of the problem arises from the restrictions imposed on the movements of the robots. In the continuous setting the robots can freely move in any direction by arbitrarily small amounts and in some models, along any curve. Therefore, techniques used in the previous works on continuous space (e.g., \cite{Dieudonne10,Flocchini08,Petit09}) are not immediately portable in the discrete setting. Collision less movement is a major challenge in the grid model due to movement restriction. To resolve this, the tail expands the initial smallest enclosing rectangle (in Phase 1 and 3) making enough room for the interior robots to reconfigure themselves inside the rectangle without colliding. Our main idea is to utilize the asymmetry of the configuration to reach an agreement on a coordinate system, and try to keep the coordinate system invariant during the movements. To achieve this, in the first three phases, the head is put at the origin and the smallest enclosing rectangle is large enough so that the interior robots are confined in an appropriately small finite subgrid. Any movement by the interior robots restricted inside the finite subgrid keeps the coordinate system unaltered. So in Phase 4, the interior robots will rearrange themselves inside the finite subgrid to partially form the given pattern. In the final three phases, the head and the tail will move to their prescribed positions. Despite the apparent simplicity of the final three phases, designing movements is somewhat complicated as the coordinate system may change or the agreement in the coordinate system may be lost in some cases in the final phases.

\subsection{Phase 1}  

A robot infers that it is in Phase 1 if $\lnot(C_1 \wedge C_2) \wedge \lnot(C_3 \wedge C_4)$ is true \footnote{$A \wedge B$ is true if and only if both $A$ and $B$ are true. $\lnot A$ is true if and only if $A$ is false.}. In this case, the tail will move to the right and all other robots will remain static. Our aim is to make both $C_3$ and $C_4$ true.

\begin{theorem}
 If we have an asymmetric configuration $\mathcal{C}$ in phase 1 at some time $t$, then 
 
 \begin{enumerate}

  \item after one move by the tail towards right, the new configuration is still asymmetric and the coordinate system remains unchanged.
  
  \item after one move by the tail towards right, we have $\lnot(C_1 \wedge C_2) =$ true.
  
  \item after finite number of moves by the tail, we shall have $(C_3 \wedge C_4) =$ true.
  
  \item after finite number of moves by the tail, Phase 1 completes with $\lnot(C_1 \wedge C_2) \wedge (C_3 \wedge C_4) =$ true.
 \end{enumerate}

\end{theorem}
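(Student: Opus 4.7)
The plan is to prove the four parts in order, with Part~1 providing the foundation for Parts~2--4. For Part~1, the goal is to show that after the tail moves one unit right, the lexicographic ordering among the four scan-strings $\lambda_{AD},\lambda_{BC},\lambda_{CB},\lambda_{DA}$ (or eight, in the square case) is preserved, so that $A$ remains the unique leading corner and hence the canonical coordinate system is unchanged. The structural observation to exploit is that, since the tail is the topmost robot in the rightmost column of $\mathcal{R}$, after the move it sits alone in a brand-new rightmost column at its unchanged height $y_{\text{tail}}$. Consequently the smallest enclosing rectangle only grows in width, from $n$ to $n+1$; its height $m$, the $Y$-coordinates of its top and bottom sides, and the positions of all non-tail robots in the absolute frame are left untouched. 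Writing $\lambda_{AD}^{\text{new}}$ explicitly, it agrees with $\lambda_{AD}^{\text{old}}$ on the first $m(n-1)$ bits, flips a single $1$ to $0$ at the old tail's position in column $n-1$, and is padded by $m$ further bits that are $0$ except for a single $1$ at the height of the tail. The other three strings admit analogous rewrites. A short case analysis---on whether $y_{\text{tail}} = y_{\text{top}}$ and on the occupancy of each corner of $\mathcal{R}$---will show that the new, mostly-zero rightmost column cannot raise $\lambda_{BC}^{\text{new}}$ or $\lambda_{CB}^{\text{new}}$ above $\lambda_{AD}^{\text{new}}$, and that $\lambda_{DA}^{\text{new}}$ still differs from $\lambda_{AD}^{\text{new}}$ at the same column where $\lambda_{DA}^{\text{old}}$ differed from $\lambda_{AD}^{\text{old}}$ (a column in $\{0,\dots,n-1\}$ whose bits are unchanged). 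This yields that $\lambda_{AD}^{\text{new}}$ is still strictly lex-largest, so the configuration remains asymmetric with the same canonical frame.

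Part~2 follows readily from Part~1: a horizontal displacement of the tail leaves its $Y$-coordinate in the canonical frame intact, and because only the tail has moved, the set $\mathcal{C}'=\mathcal{C}\setminus\{\text{tail}\}$ occupies the same positions as before. Hence both Booleans $C_1$ and $C_2$ retain their truth values individually, so $\lnot(C_1\wedge C_2)$, which holds by the Phase~1 hypothesis, persists after the move and, inductively, after every subsequent tail move within Phase~1.

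For Part~3, $M$ and $N$ are fixed by the input $\mathcal{C}_{\text{target}}$, while $m$ and $H$ depend only on heights and on the positions of non-tail robots---both invariant by Part~1. Only $n$ changes, strictly increasing by $1$ per move. Therefore after at most $\max\bigl\{\max\{M,m\}+2,\ 2\max\{N,H\}\bigr\} - n_{\text{initial}}$ moves, both inequalities of $C_3$ and $C_4$ are satisfied. Part~4 is then the direct combination: Phase~1 is characterized by $\lnot(C_1\wedge C_2)\wedge\lnot(C_3\wedge C_4)$; Part~2 rules out exiting via the first conjunct, so Phase~1 terminates precisely when $C_3\wedge C_4$ becomes true, which by Part~3 occurs in finitely many steps. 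Note that in Phase~1 all non-tail robots are instructed to do nothing, so under the ASYNC scheduler only one robot is ever moving, which avoids concurrent-move complications in the inductive argument.

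I expect Part~1, specifically verifying that no scan string catches up with or overtakes $\lambda_{AD}$ after the move, to be the main technical obstacle. The square case $m=n$ (which forces an eight-string comparison, though the extended rectangle is no longer square) and the degenerate case $m=1$ (where the shorter-side convention is trivial and a reflection across the line of robots must be ignored, as stipulated in the paper) will require separate but analogous treatment.
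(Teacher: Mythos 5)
Your overall route is the same as the paper's (show that $\lambda_{AD}$ stays strictly lexicographically largest after each rightward tail move, then derive Parts 2--4 exactly as you do), and Parts 2--4 of your plan are fine. But there is a genuine gap in the one step you yourself flag as the crux: your argument that $\lambda_{AD}^{\text{new}} > \lambda_{DA}^{\text{new}}$ rests on the claim that the two strings first differ ``at the same column where $\lambda_{DA}^{\text{old}}$ differed from $\lambda_{AD}^{\text{old}}$,'' a column ``whose bits are unchanged.'' That claim is false in general. Since $\lambda_{AD}$ and $\lambda_{DA}$ scan the same columns left to right (bottom-to-top versus top-to-bottom), their first disagreement can lie in the rightmost column itself---precisely the column the tail vacates---whenever every column except the last is palindromic, i.e., when $\mathcal{C}$ minus the last column has a horizontal mirror symmetry. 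Concretely, take $m=3$, $n=4$ with columns (bottom, middle, top) given by $(1,1,1)$, $(1,0,1)$, $(0,1,0)$, $(1,0,0)$: here $\lambda_{AD} = 111\,101\,010\,100$ is strictly largest, the tail is the lone robot at the bottom of the last column, and $\lambda_{AD}$ and $\lambda_{DA}$ first differ at the tail's own bit, which flips to $0$ after the move. The conclusion still holds (the discriminating $1$ reappears in the appended column before the corresponding position of $\lambda_{DA}^{\text{new}}$), but your stated invariant does not prove it, and your proposed case split (on $y_{\text{tail}} = y_{\text{top}}$ and corner occupancy) does not isolate this situation.

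The paper closes exactly this hole with a three-way case analysis on the tail's scan indices: $x$, its position in $\lambda_{AD}^{\text{old}}$, versus $y$, its position in $\lambda_{DA}^{\text{old}}$. When $x = y$ ($m$ odd, tail at the midpoint of $BC$), the strict inequality already holds on the first $x-1$ terms, which survive the move; when $x < y$, it uses that the leading entries of the last column in $\lambda_{DA}$ remain $0$ while (if the tail was not alone on $BC$) a $1$ appears among the corresponding positions of $\lambda_{AD}^{\text{new}}$, with the lone-tail subcase handled separately; when $x > y$, the move only lowers $\lambda_{DA}$ on the relevant prefix while leaving that prefix of $\lambda_{AD}$ intact. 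To repair your proof you need this (or an equivalent) analysis keyed to where the tail sits in the two opposing scans, not to corner occupancy; your treatment of the $\lambda_{B'C'}$ and $\lambda_{C'B'}$ comparisons, and your observation that the new rectangle is non-square so only four strings matter, are consistent with the paper and can stand as planned.
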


\begin{proof}

 1) Let the smallest enclosing rectangle at time $t$ be $\mathcal{R}^{old} = ABCD$, with $|AD| = m$ and $|AB| = n$, $n \geq m$. Since $\mathcal{C}$ is asymmetric, we must have $n > 2$. Let $\lambda_{AD}$ be the lexicographically largest string. Hence the tail $r$ is on the edge $BC$. After a move by the tail towards right, the smallest enclosing rectangle becomes $\mathcal{R}^{new} = AB'C'D$, with $r$ now on the edge $B'C'$.  
 
 We already have $\lambda_{AD}^{old} > \lambda_{DA}^{old}$. Suppose that in $\mathcal{C}$, $r$ is the only robot on $BC$. Then it is easy to see that $\lambda_{AD}^{new} > \lambda_{DA}^{new}$. Now assume that there are multiple robots on $BC$. Suppose that $r$ corresponds to the $x$th and $y$th term in $\lambda_{AD}^{old}$ and $\lambda_{DA}^{old}$ respectively.

 \textbf{Case-1 :} Let $x = y$. This means that $m$ is odd and $r$ is on the middle point of $BC$. Since the $x$th term is the last non-zero term in $\lambda_{AD}^{old}$, we must have $\lambda_{AD}^{old}\,_{\mbox{$\vert_{x-1}$}} > \lambda_{DA}^{old}\,_{\mbox{$\vert_{x-1}$}}$, where $\lambda\,_{\mbox{$\vert_{p}$}}$ is the string obtained by taking the first $p$ terms of $\lambda$. Therefore, $\lambda_{AD}^{new}\,_{\mbox{$\vert_{x-1}$}} > \lambda_{DA}^{new}\,_{\mbox{$\vert_{x-1}$}}$, which implies $\lambda_{AD}^{new} > \lambda_{DA}^{new}$. 
 
 \textbf{Case-2 :} Let $x < y$. Let $x = m(n-1)+b$. Since $x < y$, we have $b \leq \lfloor \frac{n}{2} \rfloor$. Since $r$ is the tail, the $(mn-m+1)$th to $(mn-m+b)$th terms of $\lambda_{DA}^{old}$ are all 0. The same is true for $\lambda_{DA}^{new}$. Since there were multiple robots on $BC$ at time $t$, there is at least one 1 among the $(mn-m+1)$th to $(mn-m+b)$th terms in $\lambda_{AD}^{new}$. Hence, $\lambda_{AD}^{new} > \lambda_{DA}^{new}$. 
 
  \textbf{Case-3 :} If $x > y$, then $r$ is encountered strictly earlier, as we scan $\mathcal{R}^{old}$ according to the string $\lambda_{DA}^{old}$, than $\lambda_{AD}^{old}$. This implies that $\lambda_{DA}^{old}\,_{\mbox{$\vert_{y}$}} > \lambda_{DA}^{new}\,_{\mbox{$\vert_{y}$}}$ and $\lambda_{AD}^{old}\,_{\mbox{$\vert_{y}$}} = \lambda_{AD}^{new}\,_{\mbox{$\vert_{y}$}}$. But as $\lambda_{AD}^{old} > \lambda_{DA}^{old}$, we have $\lambda_{AD}^{old}\,_{\mbox{$\vert_{y}$}} \geq \lambda_{DA}^{old}\,_{\mbox{$\vert_{y}$}}$. Hence, we have $\lambda_{AD}^{new}\,_{\mbox{$\vert_{y}$}} > \lambda_{DA}^{new}\,_{\mbox{$\vert_{y}$}}$, and so $\lambda_{AD}^{new} > \lambda_{DA}^{new}$.

 Therefore, we have $\lambda_{AD}^{new} > \lambda_{DA}^{new}$. Now we compare $\lambda_{AD}^{new}$ and $\lambda_{C'B'}^{new}$. Clearly, $r$ corresponds to the first 1 in $\lambda_{CB}^{old}$ (and also $\lambda_{C'B'}^{new}$). Suppose it is the $x$th term $\lambda_{CB}^{old}$. If the first 1 in $\lambda_{AD}^{old}$ appears before the $x$th term, then we easily have $\lambda_{AD}^{new} > \lambda_{C'B'}^{new}$. So let the $x$th term be the first 1 in $\lambda_{AD}^{old}$. Let the second 1 in $\lambda_{AD}^{old}$ be the $y$th term. Again, if the second 1 in $\lambda_{CB}^{old}$ appears beyond the $y$th term, then we are done. So assume that the second 1 in $\lambda_{CB}^{old}$ is the $y$th term. Hence, $\lambda_{AD}^{old}\,_{\mbox{$\vert_{y}$}} = \lambda_{CB}^{old}\,_{\mbox{$\vert_{y}$}}$. Now $\lambda_{C'B'}^{new}$ is obtained by inserting a string of 0's of length $m$ in $\lambda_{CB}^{old}$ after the $x$th term. Hence, we clearly have $\lambda_{AD}^{new}\,_{\mbox{$\vert_{y}$}} = \lambda_{AD}^{old}\,_{\mbox{$\vert_{y}$}} > \lambda_{C'B'}^{new}\,_{\mbox{$\vert_{y}$}}$. Thus, $\lambda_{AD}^{new} > \lambda_{C'B'}^{new}$. We can similarly show that $\lambda_{AD}^{new} > \lambda_{B'C'}^{new}$.

 Hence, $\lambda_{AD}^{new}$ is lexicographically strictly larger than $\lambda_{DA}^{new}$, $\lambda_{B'C'}^{new}$ and $\lambda_{C'B'}^{new}$. Since, $\mathcal{R}^{new} = AB'C'D$ is a non-square rectangle, we only have these four binary strings to consider. Hence, we can clearly see that the new configuration is still asymmetric and the coordinate system is unchanged. 
 
 2) Since only the tail moves, the value of $C_1$ remains unchanged. As the tail moves to the right, its $Y$-coordinate and hence $C_2$ is also unchanged. Therefore, $\lnot(C_1 \wedge C_2)$ remains true after the move. 
 
 3) It follows from 1) that the tail remains invariant throughout phase 1. Clearly, after a finite number of moves by the tail towards right, both $C_3$ and $C_4$ will become true.
 
 4) Follows from 2) and 3). \qed
\end{proof}

\subsection{Phase 2}

The algorithm is in phase 2, when  either $C_3 \wedge C_4 \wedge \lnot C_5 \wedge \lnot C_7$ or $\lnot C_2 \wedge C_3 \wedge C_4 \wedge \lnot C_5 \wedge C_7$ is true. Our aim is to take the head to the origin. Hence, in this phase, the head will move down towards the origin.

\begin{theorem}
 If we have an asymmetric configuration $\mathcal{C}$ in phase 2 at some time $t$, then 
 
 \begin{enumerate}
    
  \item after one move by the head downwards, the new configuration is still asymmetric and the coordinate system is unchanged.

  \item after finite number of moves by the head, $C_5$ becomes true.
  
  \item after finite number of moves by the head, phase 2 completes with $C_3 \wedge C_4 \wedge C_5 \wedge \lnot C_7$ or $\lnot C_2 \wedge C_3 \wedge C_4 \wedge C_5 \wedge C_7$ true.
 \end{enumerate}

\end{theorem}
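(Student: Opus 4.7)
The plan is to split the proof into three parts mirroring the phase 1 theorem, with Part 1 (preservation of asymmetry and coordinate system under one downward head move) as the technical core. For Part 1, let the head sit at $(0, y_h)$ inside the enclosing rectangle $ABCD$, where $A$ is the leading corner. Since $\lnot C_5$ forces $y_h \geq 1$, the cell $(0, y_h - 1)$ is empty (the head is the bottom-most robot on column $0$ by definition), so the move is collision-free. A useful preliminary observation is that the corner $D = (0, m-1)$ must also be empty: otherwise $\lambda_{DA}[1] = 1 > 0 = \lambda_{AD}[1]$, contradicting $A$'s status as the leading corner. Consequently $Y = m-1$ is witnessed by a robot on some column $X > 0$, and both the top and bottom edges of the rectangle survive the move; the left and right edges still contain the head and the tail; hence $ABCD$ is unchanged. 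Because $C_3$ forces $n > m$, the rectangle is non-square, and asymmetry together with the choice of leading corner reduces to comparing the four strings $\lambda_{AD}, \lambda_{DA}, \lambda_{BC}, \lambda_{CB}$.

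The head's move shifts a single $1$ by one position in each string, yielding
\[
\lambda_{AD}^{new} > \lambda_{AD}^{old},\qquad \lambda_{DA}^{new} < \lambda_{DA}^{old},\qquad \lambda_{BC}^{new} > \lambda_{BC}^{old},\qquad \lambda_{CB}^{new} < \lambda_{CB}^{old},
\]
with the $\lambda_{AD}, \lambda_{DA}$ changes confined to the first $m$ positions and the $\lambda_{BC}, \lambda_{CB}$ changes to the last $m$. The inequalities $\lambda_{AD}^{new} > \lambda_{DA}^{new}$ and $\lambda_{AD}^{new} > \lambda_{CB}^{new}$ then follow from the chains $\lambda_{AD}^{new} > \lambda_{AD}^{old} > \lambda_{DA}^{old} > \lambda_{DA}^{new}$ and its analogue for $\lambda_{CB}$. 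The real obstacle is $\lambda_{AD}^{new}$ versus $\lambda_{BC}^{new}$, since both have grown. I would handle this by a short case split on the first index $i^*$ at which $\lambda_{AD}^{old}$ and $\lambda_{BC}^{old}$ disagreed. The key fact is that $\lambda_{AD}^{new}[y_h] = 1$, whereas position $y_h$ of $\lambda_{BC}^{new}$ is untouched by the move (it lies well within the first $(n-1)m$ positions) and equals $\lambda_{BC}^{old}[y_h]$. The case $i^* = y_h$ is impossible because $\lambda_{AD}^{old}[y_h] = 0$; for $i^* > y_h$, agreement up to $i^*-1$ forces $\lambda_{BC}^{new}[y_h] = \lambda_{AD}^{old}[y_h] = 0$, so the new strings disagree at $y_h$ with $\lambda_{AD}^{new}$ larger; the cases $i^* < y_h$ and $i^* = y_h + 1$ are dispatched by direct inspection at position $i^*$ or $y_h$ respectively. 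In every subcase $\lambda_{AD}^{new}$ strictly exceeds $\lambda_{BC}^{new}$, which confirms asymmetry and the invariance of the coordinate system.

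Parts 2 and 3 are then short. For Part 2, every phase-2 step strictly decreases $y_h$ by $1$, while Part 1 keeps the head on column $0$ and $A$ as the leading corner between steps; since $y_h$ is a positive integer, after at most the initial value of $y_h$ moves the head reaches the origin and $C_5$ becomes true. For Part 3, the tail and every interior robot remain stationary throughout phase 2 and the enclosing rectangle is invariant, so $n$, $m$, the $Y$-coordinate of the tail, the side lengths $H$ and $V$ associated with $\mathcal{C}'$, and the set $\mathcal{C}''$ are all preserved. The truth values of $C_2, C_3, C_4, C_7$ are therefore carried unchanged from phase entry to phase exit, and when $C_5$ first becomes true the configuration satisfies precisely one of $C_3 \wedge C_4 \wedge C_5 \wedge \lnot C_7$ or $\lnot C_2 \wedge C_3 \wedge C_4 \wedge C_5 \wedge C_7$, matching the branch of the phase-2 entry condition that originally held.
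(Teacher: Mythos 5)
Your proposal is correct, and it is worth noting that the paper itself states this Phase 2 theorem \emph{without any proof} -- so there is nothing to compare against directly; the nearest template is the paper's Phase 1 proof (tail moving right), and your argument follows exactly that style of lexicographic string comparison, adapted to the head. Your key steps all check out: $\lnot C_5$ plus the fact that the head is the first $1$ of $\lambda_{AD}$ places it as the bottom-most robot of column $0$ at height $y_h \geq 1$ (note this implicitly forces $m \geq 2$, since for $m=1$ the head is automatically at $A$); the emptiness of $D$ (else $\lambda_{DA} > \lambda_{AD}$, as $A$ is empty under $\lnot C_5$) guarantees the top edge of $\mathcal{R}$ is witnessed off column $0$, so the rectangle is invariant; $C_3$ gives $n > m$, so only four strings need comparing; and the monotonicity table $\lambda_{AD}^{new} > \lambda_{AD}^{old}$, $\lambda_{DA}^{new} < \lambda_{DA}^{old}$, $\lambda_{BC}^{new} > \lambda_{BC}^{old}$, $\lambda_{CB}^{new} < \lambda_{CB}^{old}$ dispatches $\lambda_{DA}$ and $\lambda_{CB}$ by chaining through the old inequalities, exactly as needed. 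Your case split for $\lambda_{AD}^{new}$ versus $\lambda_{BC}^{new}$ is sound but can be collapsed: since $\lambda_{AD}^{old} > \lambda_{BC}^{old}$ and the first $1$ of $\lambda_{AD}^{old}$ sits at the head's index, \emph{every} entry of $\lambda_{BC}^{old}$ before that index is $0$; the move leaves the first $(n-1)m$ positions of $\lambda_{BC}$ untouched while advancing the first $1$ of $\lambda_{AD}$ one slot earlier, so the new strings first disagree at position $y_h$ with $\lambda_{AD}^{new}$ larger -- no analysis of the old first-disagreement index $i^*$ is required.

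One small slip in Part 3: you claim $V$ (the vertical side of the smallest enclosing rectangle of $\mathcal{C}'$) is preserved, but it need not be -- if column $0$ contains only the head and the top row of $\mathcal{R}$ is occupied solely by the tail, the head can be the topmost robot of $\mathcal{C}'$, and its downward move shrinks $V$. This is harmless here, because $V$ enters only $C_6$, which appears nowhere in the phase 2 entry or exit conditions; the quantities that actually matter ($n$, $m$, $M$, $N$, $H$, the tail's $Y$-coordinate, and $\mathcal{C}''$) are indeed invariant as you argue, since the head moves vertically within column $0$, the coordinate system is unchanged by Part 1, and no other robot moves. With that caveat noted, Parts 2 and 3 go through as written.
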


\subsection{Phase 3}

The algorithm is in phase 3, if $C_3 \wedge C_4 \wedge C_5 \wedge \lnot C_6 \wedge \lnot C_7$ is true. In this phase, there are two cases to consider. The robots will check if $C_8$ is true or false. Let us first consider the case where $C_8$ is false. In this case, the tail will move upwards and the rest will remain static.

\begin{theorem}\label{p3}
 If we have an asymmetric configuration $\mathcal{C}$ in phase 3 at some time $t$ with $C_8 =$ false, then 
 
 \begin{enumerate}

  \item after one move by the tail upwards, the new configuration is still asymmetric and the coordinate system is unchanged.
  
  \item after one move by the tail upwards, we still have $C_4 \wedge C_5 \wedge \lnot C_7 =$ true.

  \item after finite number of moves by the tail, we shall have $C_3 \wedge C_4 \wedge C_5 \wedge C_6 \wedge \lnot C_7 =$ true.
 \end{enumerate}

 \end{theorem}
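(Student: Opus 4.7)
The overall structure of the argument mirrors the Phase~1 theorem, with the tail now moving vertically and the hypothesis $\lnot C_8$ playing the role that $n>m$ played there.

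For part (1), write $\mathcal{R}^{old}=ABCD$ with $|AD|=m$, $|AB|=n$, and $\lambda_{AD}^{old}$ the unique lex-largest corner string, so that the tail $r$ is the topmost robot of the rightmost occupied column. I would split into two subcases according as $r$ lies strictly below the top edge $DC$ or on it. In the first subcase $\mathcal{R}^{new}=\mathcal{R}^{old}$; only one column of each corner string changes, and a case analysis paralleling Cases 1--3 of the Phase~1 proof (comparing the position of the tail in $\lambda_{AD}^{new}$ with its image in each of $\lambda_{DA}^{new}$, $\lambda_{BC}^{new}$, $\lambda_{CB}^{new}$) shows that $\lambda_{AD}^{new}$ still strictly dominates. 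In the second subcase the rectangle grows by a row to $\mathcal{R}^{new}=AB'C'D'$ with $|AD'|=m+1$; each new corner string is obtained from the old one by inserting one zero per column and re-marking the tail's new position. The delicate comparison is $\lambda_{AD'}^{new}$ versus $\lambda_{D'A}^{new}$: equality would force $\mathcal{C}^{new}$ to be invariant under the horizontal reflection about the midline of $\mathcal{R}^{new}$, and because that reflection must map the tail either to itself or to a point outside $\mathcal{C}'$, it would restrict to a horizontal reflectional symmetry of $\mathcal{C}'$, contradicting $\lnot C_8$. The comparisons with $\lambda_{B'C'}^{new}$ and $\lambda_{C'B'}^{new}$ go through as in Phase~1 because the tail is still the topmost robot in the rightmost occupied column and $n$ is unchanged. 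In both subcases $\lambda_{AD}^{new}$ (respectively $\lambda_{AD'}^{new}$) remains uniquely lex-largest, giving an asymmetric configuration with the same leading corner $A$ and the same oriented axes.

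For part (2), only the tail moves and only vertically, so $\mathcal{C}'$ is set-wise unchanged. Hence $H$, the position of the head, and the truth value of $C_7$ are preserved; since $n$ is also unchanged, $C_4\wedge C_5\wedge \lnot C_7$ continues to hold.

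For part (3), each upward move either leaves $m$ unchanged (tail rises within the rectangle) or increases $m$ by exactly one (tail leaves the top row). Since $V$ is constant across the phase (by the invariance of $\mathcal{C}'$), $m$ reaches $\max\{M,V\}+1$ after finitely many steps, making $C_6$ true. The inequality $C_3$ survives because $C_4$ gives $n\ge 2\max\{N,H\}\ge 2M$, combined with the initial margin $n\ge\max\{M,m_0\}+2$ and the fact that $m$ only grows up to $\max\{M,V\}+1$; a short arithmetic check then yields $n\ge\max\{M,m\}+2$ throughout. Together with part (2) this gives $C_3\wedge C_4\wedge C_5\wedge C_6\wedge \lnot C_7$ at termination.

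The main obstacle is the rectangle-growth subcase of part (1). One has to rule out every possible equality between $\lambda_{AD'}^{new}$ and a rival corner string, which reduces to showing that any such equality would induce a symmetry of $\mathcal{C}'$ forbidden either by $\lnot C_8$ (for horizontal reflections) or by the asymmetry of $\mathcal{C}^{old}$ (for the other symmetry types, which are preserved under vertical tail movement as long as the rectangle remains non-square). The rest is careful bookkeeping of how the inserted zero rows align across the four corner strings.
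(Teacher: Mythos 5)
Your proposal has two genuine gaps, one of which is fatal to part (3) as you argue it. You claim that ``a short arithmetic check'' yields $n \geq \max\{M,m\}+2$ throughout the phase, i.e., that $C_3$ survives. This is false, and handling its failure is the crux of the paper's proof of part (3). The available lower bounds are $n \geq \max\{M,m_0\}+2$ (initial $C_3$) and $n \geq 2\max\{N,H\}$ (from $C_4$), while the final height is $m = \max\{M,V\}+1$, so $C_3$ at the end would need $n \geq \max\{M,V\}+3$; since $V \leq m_0$, the bounds only give $n \geq \max\{M,V\}+2$, which can fall short by exactly one. Concretely, take $M=N=2$, $m_0=2$, $n=4$, $H,V \leq 2$: then $C_3 \wedge C_4 \wedge \lnot C_6$ holds, but the move that grows the rectangle to $m=3$ leaves $n = 4 < m+2$, breaking $C_3$. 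The paper confronts exactly this: it shows that $C_3$ can only break when $n = m+2$ held before the move, that at that moment $C_6$ has just become true (since $n \geq M+2$ forces $M \leq m$, whence $m+1 \geq \max\{M,V\}+1$), and that the configuration then falls into Phase~1, where a single rightward move by the tail restores $C_3$, completing the phase with $C_3 \wedge C_4 \wedge C_5 \wedge C_6 \wedge \lnot C_7$ true. This temporary Phase~3 $\to$ Phase~1 excursion is the unique cycle in the paper's phase-transition diagram, and livelock-freedom rests on this very analysis; the theorem cannot be proved with $C_3$ held invariant as you propose.

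In part (1) you have the right key idea --- equality of strings on the prefix covering $\mathcal{C}'$ would force a horizontal reflectional symmetry of $\mathcal{C}'$, contradicting $\lnot C_8$ --- but you deploy it in the wrong subcase. In the rectangle-growth subcase, the comparison $\lambda_{AD'}$ versus $\lambda_{D'A}$ is trivial: $C_5$ puts the head at $A$, while $D'$ is a corner of the freshly created top row, which is empty except for the tail at the opposite end, so the two strings already differ in their first character; the paper accordingly dismisses this case as easy (the only point to check being that $n \geq m+2$ keeps the rectangle non-square). The genuinely delicate comparison is $\lambda_{AD}$ versus $\lambda_{DA}$ when the rectangle is \emph{unchanged}, and there your proposed ``case analysis paralleling Cases 1--3 of the Phase~1 proof'' does not go through: an upward move shifts the tail's index forward in the bottom-up scan $\lambda_{AD}$ but backward in the top-down scan $\lambda_{DA}$ (unlike Phase~1, where the rightward move keeps the tail the last $1$ and merely inserts zeros), so bookkeeping of the tail's positions alone cannot certify $\lambda_{AD}^{new} > \lambda_{DA}^{new}$ --- indeed, when $C_8$ is true and the tail sits below the horizontal midline, the inequality genuinely flips as the tail rises, which is why the paper treats that situation separately with a coordinate flip. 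What the paper uses instead, and what your subcase needs, is that $C_4$ confines all of $\mathcal{C}'$ to the first $p = m\lfloor \frac{n}{2} \rfloor$ terms of both strings, and $\lnot C_8$ makes that prefix comparison strict, rendering the tail's vertical position irrelevant. Your part (2) is fine and matches the paper.
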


 \begin{proof}
 
  Let the smallest enclosing rectangle at time $t$ be $\mathcal{R}^{old} = ABCD$, with $|AD| = m$ and $|AB| = n$, $n > m$. Let $\lambda_{AD}^{old}$ be the lexicographically largest string.

  \textbf{Case 1 :} Suppose the smallest enclosing rectangle remains unchanged after the move, i.e., $\mathcal{R}^{new} = ABCD$. Since $C_4 \wedge C_5$ is true, it is easy to see that $\lambda_{AD}^{new} > \lambda_{BC}^{new}$ and $\lambda_{AD}^{new} > \lambda_{CB}^{new}$. Finally, $\lambda_{AD}^{new} > \lambda_{DA}^{new}$ follows from the fact that $C_4 \wedge C_5 \wedge \lnot C_8$ is true. To see this, note that $n \geq 2H$ as $C_4$ is true. This implies that the first $p = m\lfloor \frac{n}{2} \rfloor$ terms of both $\lambda_{AD}^{old}$ and $\lambda_{DA}^{old}$ contains all terms corresponding to the robots in $\mathcal{C}'$. Since $C_8$ is false, we must have $\lambda_{AD}^{old}\,_{\mbox{$\vert_{p}$}} > \lambda_{DA}^{old}\,_{\mbox{$\vert_{p}$}}$, and hence  $\lambda_{AD}^{new}\,_{\mbox{$\vert_{p}$}} > \lambda_{DA}^{new}\,_{\mbox{$\vert_{p}$}}$. So, $\lambda_{AD}^{new} > \lambda_{DA}^{new}$.

  \textbf{Case 2 :} Suppose that the tail is at $C$ at time $t$, and after a move upwards, we have $\mathcal{R}^{new} = ABC'D'$, $|AD'| = m+1$. Since $n \geq m+2 \Rightarrow n > m+1$ $\Rightarrow |AB| >|AD'|$, the smallest enclosing rectangle is still not a square. This implies that we still need to consider only four binary strings and it is easy to see that $\lambda_{AD'}$ is strictly largest among them.

 Therefore, we have proved 1). It is easy to see 2), i.e.,  $C_4 \wedge C_5 \wedge \lnot C_7$ is true after the move. However, $C_3$ might become false after the move described in case 2.  Then the phase changes to phase 1. Since, before the move, we had $C_3 =$ true, $n \geq m + 2$ and $n \geq M + 2$. If $C_3$ becomes false after the move, we have $n < $max$\{M, m + 1\} + 2$ $\Rightarrow n \leq m + 2$ (as $n \geq M + 2$). Since we also have $n \geq m + 2$, we get $n = m + 2 \Rightarrow m = n - 2$.  If $M \leq m$, after the move, $C_6$ becomes true as $m+1 \geq V+1$ and $m+1 \geq M+1$. If $m < M$, then $n-2 < M \Rightarrow n < M + 2$, a contradiction. Therefore, after the move, we have $C_6 =$ true, i.e., we have $\lnot C_3 \wedge C_4 \wedge C_5 \wedge C_6 \wedge \lnot C_7 =$ true. So we are in phase 1, and the tail will move rightwards. After one rightwards move, we shall have $C_3 \wedge C_4 \wedge C_5 \wedge C_6 \wedge \lnot C_7 =$  true. \qed

 \end{proof}

\begin{figure}[h]
\centering
\subcaptionbox[Short Subcaption]{ Case 1
      \label{phase3}
}
[
    0.8\textwidth 
]
{
    \fontsize{9pt}{9pt}\selectfont
    \def\svgwidth{0.4\textwidth}
    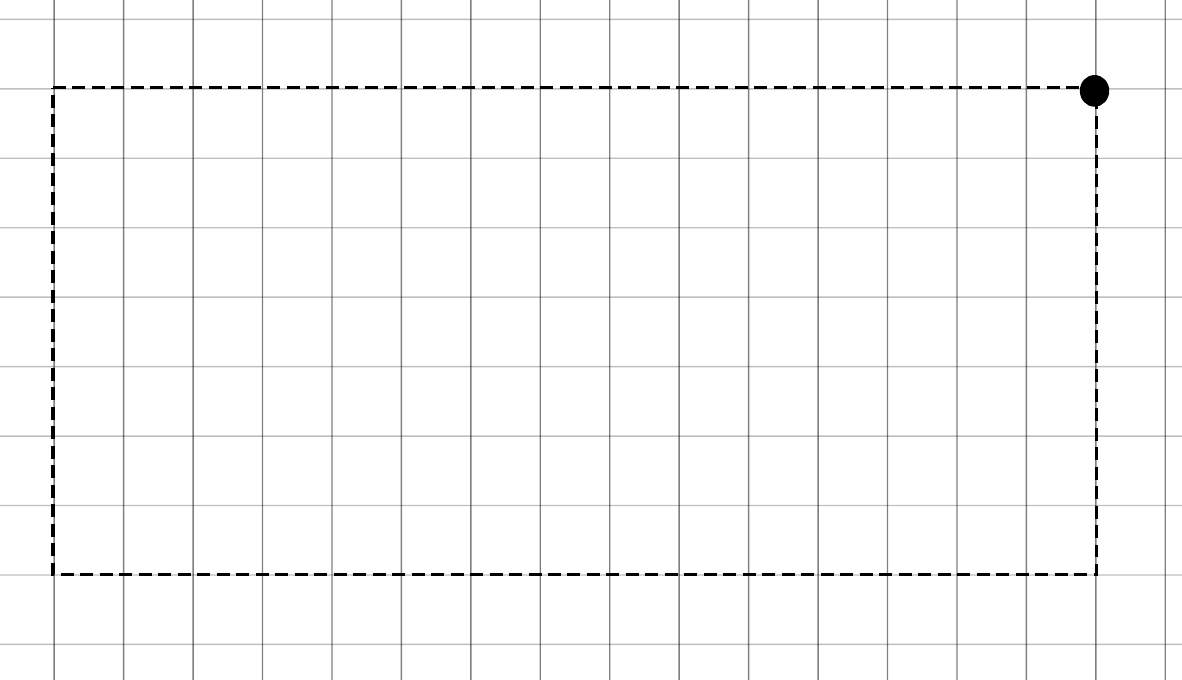
}
 \\
\subcaptionbox[Short Subcaption]{ Case 2
      \label{phase3b}
}
[
    0.8\textwidth 
]
{
    \fontsize{9pt}{9pt}\selectfont
    \def\svgwidth{0.4\textwidth}
    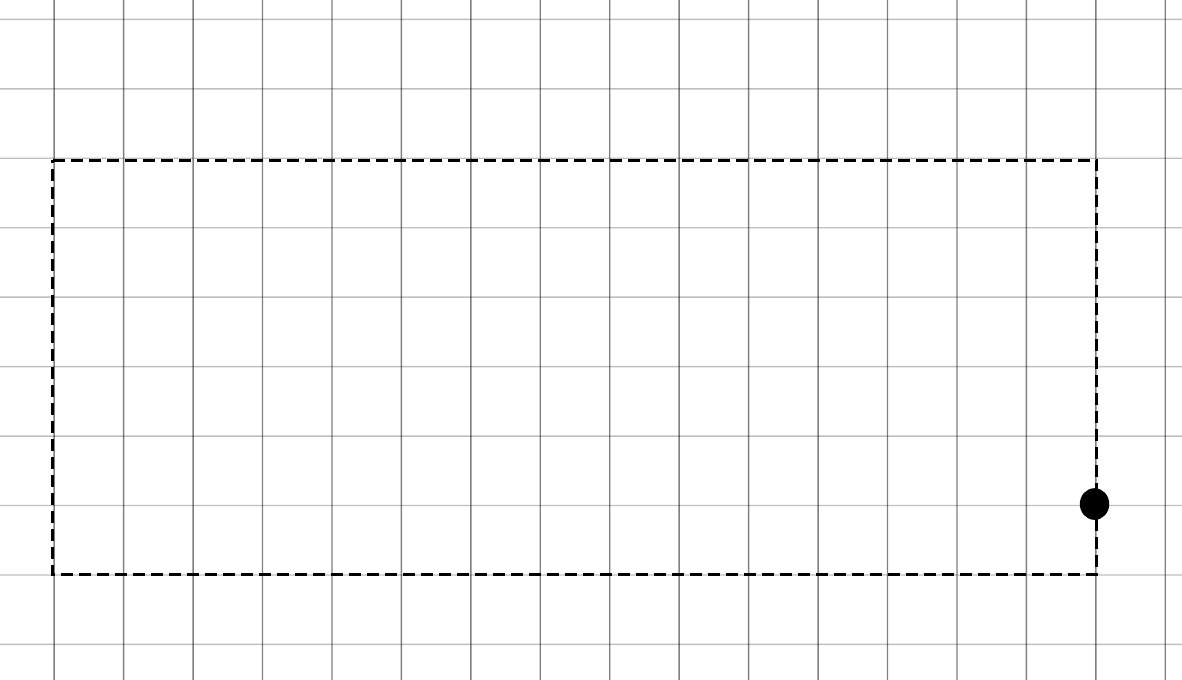
}

\caption[Short Caption]{Illustration of phase 3 with $C_8 =$ true.}

\end{figure}

 Now assume that $C_8$ is true, i.e., $\mathcal{C}'$ has a non-trivial reflectional symmetry with respect to a horizontal line $L$. Again, let the smallest enclosing rectangle be $\mathcal{R}^{old} = ABCD$, with $|AD| = m$ and $|AB| = n$, $n > m$. Let $\lambda_{AD}^{old}$ be the lexicographically largest string. Let $E$ be the point of $BC$ where it intersects with $L$. Let the smallest enclosing rectangle of $\mathcal{C}'$ be $\mathcal{R}' = AB'C'D'$. There are two cases to consider: $D \neq D'$ (Case 1) and $D = D'$ (Case 2).
 
 \textbf{Case 1:} See Fig.\ref{phase3}. In this case, the tail $r$ will move upwards. Clearly, after finite number of moves by the tail $C_6 =$ true is achieved. However, $C_3$ may become false, but is recovered after one move rightwards as explained previously.
 
 \textbf{Case 2:} See Fig.\ref{phase3b}. Since the configuration is asymmetric and $\lambda_{AD}^{old}$ is the largest string, the tail $r$ must be in $[B,E)$. In this case, $r$ will move downwards. When $r$ goes below $B$, the coordinate system flips. The new coordinate system has origin at $D$, $X$-axis $= \overrightarrow{DC}$ and $Y$-axis $= \overrightarrow{DA}$. Clearly, the case is reduced to the situation similar to case 1.
 
 \begin{theorem}
  If we have an asymmetric configuration $\mathcal{C}$ in phase 3 at some time $t$ with $C_8$ true, then after finite number of moves by the tail, we have $C_3 \wedge C_4 \wedge C_5 \wedge C_6 \wedge \lnot C_7 =$ true.
 \end{theorem}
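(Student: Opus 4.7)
The proof splits into the two sub-cases identified in the discussion preceding the theorem: $D \neq D'$ (Case~1) and $D = D'$ (Case~2). In Case~1 the tail lies strictly below the bottom edge of $\mathcal{R}'$ and moves upward; the plan is to mimic the proof of Theorem~\ref{p3}. A lexicographic comparison shows that each upward move preserves asymmetry with the same leading corner $A$: condition $C_4$ keeps $\lambda_{AD}$ strictly larger than $\lambda_{BC}$ and $\lambda_{CB}$, while the tail, sitting alone in the lower rows of $\mathcal{R}$ below the horizontally symmetric $\mathcal{C}'$, keeps $\lambda_{AD} > \lambda_{DA}$. Since only the tail moves, $C_5$ and $\lnot C_7$ are preserved, and after finitely many moves $C_6$ becomes true. $C_3$ may fail transiently, but a single rightward step in Phase~1 restores it, exactly as at the end of the proof of Theorem~\ref{p3}.

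Case~2 is where the coordinate system flips. The tail is in $[B,E)$ on the right edge and moves downward. As long as the tail is strictly above $E$, the bottom half of $\mathcal{R}$ contains only the reflected copies of interior robots sitting in the top half, so the tail's position strictly above the midpoint is precisely what keeps $\lambda_{AD} > \lambda_{DA}$, and each downward step preserves asymmetry with leading corner $A$. Once the tail crosses the axis $L$, the leading corner switches from $A$ to $D$, and the coordinate system flips to new origin $D$, new $X$-axis $\overrightarrow{DC}$, new $Y$-axis $\overrightarrow{DA}$. In this flipped system the tail now lies strictly below the (new) $\mathcal{R}'$, which is exactly the Case~1 configuration; applying Case~1's conclusion yields $C_3 \wedge C_4 \wedge C_5 \wedge C_6 \wedge \lnot C_7$ in finitely many further moves.

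The hard part will be handling the moment of the flip in Case~2. If $m$ is odd, the downward-moving tail may land on the grid point $E \in L$, at which instant the configuration is genuinely symmetric and no leading corner is defined. I plan to argue that in Phase~3 only the tail has an active rule, so no other robot is induced to move during this transient; the next downward step by the tail then breaks the symmetry in favor of $D$ and settles the new coordinate system. A secondary wrinkle is that the flip can invalidate $C_5$, since the old head is not at the new origin; in that event the algorithm re-enters Phase~2, drives the head down to the new origin, and returns to Phase~3 already in Case~1. Throughout, care is needed because the identity of head and tail changes under the flip, which could in principle threaten $\lnot C_7$; verifying that $\mathcal{C}''$ and $\mathcal{C}''_{target}$ remain unequal in the flipped coordinates is a necessary check before invoking Case~1.
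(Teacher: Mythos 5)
Your skeleton matches the paper's (the same two cases $D \neq D'$ and $D = D'$, a coordinate flip reducing Case~2 to Case~1, and a transient loss of $C_3$ repaired by one rightward Phase-1 move as at the end of Theorem~\ref{p3}), but your geometry is inverted in both cases, and this breaks the actual argument. In Case~1, since $C_5$ holds, the head occupies the bottom-left corner $A$, which $\mathcal{R}$ and $\mathcal{R}'$ share; hence no robot lies below row zero and the tail can never be ``strictly below the bottom edge of $\mathcal{R}'$.'' Rather, $D \neq D'$ forces the tail to be the unique topmost robot, sitting at the corner $C$, and its upward moves \emph{expand} $\mathcal{R}$ until $C_6$ holds (your picture, a tail below $\mathcal{R}'$ moving up, would shrink $\mathcal{R}$ and never achieve $C_6$). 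Consequently your mechanism for $\lambda_{AD} > \lambda_{DA}$ (the tail in the lower rows, appearing earlier in the scan from $A$) is the Case-2 mechanism, not Case~1's: there $\lambda_{AD} > \lambda_{DA}$ is immediate because $A$ is occupied and $D$ is empty, and the delicate comparisons are with $\lambda_{CB}$ and $\lambda_{BC}$, whose anchor corner is occupied by the tail; these are won using $C_4$ ($n \geq 2\max\{N,H\}$), which pushes the second 1 of those strings far beyond the second 1 of $\lambda_{AD}$, supplied in column zero by the mirror image of the head.

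In Case~2 the tail lies in $[B,E)$, i.e., on the $A$-side of the axis $L$ and strictly below the midpoint of $BC$ — this, not a position ``strictly above the midpoint,'' is what makes the tail occur earlier in $\lambda_{AD}$ than in $\lambda_{DA}$ and keeps $A$ leading. It then moves downward, \emph{away} from $L$: it never crosses $L$ and can never land on $E$. So the ``hard part'' you plan to handle — a genuinely symmetric instant with the tail on the axis — cannot arise; the flip actually occurs when the tail exits the rectangle below $B$, extending $\mathcal{R}$ downward, whereupon the string anchored at $D$ becomes strictly largest. Had your scenario been possible it would be fatal rather than transient: in a symmetric configuration there is no leading corner, hence no definable tail, and a deterministic oblivious algorithm cannot single out the mover for ``the next downward step,'' so your recovery plan is unimplementable. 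Your $C_5$ worry is also unfounded: by the horizontal symmetry of $\mathcal{C}'$, the point $D = D'$ carries the mirror image of the head; it corresponds to the first 1 of the new largest string, so it \emph{is} the new head and sits at the new origin. Thus $C_5$ survives the flip and no Phase-2 detour occurs — consistent with the phase-transition diagram, which has no edge from Phase~3 to Phase~2. Your final flag, that the flip renames head and tail so $\lnot C_7$ should be re-verified, is a fair observation that the paper itself glosses over (and at worst it would drop the configuration into Phase~5, not break the algorithm), but it is peripheral next to the directional errors above.
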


\subsection{Phase 4}

If the configuration satisfies $C_3 \wedge C_4 \wedge C_5 \wedge C_6 \wedge \lnot C_7 =$ true, then the algorithm is in phase 4. In this phase, the head and the tail will remain static. Let $\mathcal{F}$ be the subgrid of $\mathcal{R}$ of size $(m-1) \times \lfloor \frac{n}{2} \rfloor $ with coinciding bottom-left corners (See Fig. \ref{phase4}). $\mathcal{F}$ can be considered as a finite line segment $\mathcal{L}$ as shown in Fig. \ref{phase4}. The interior robots will execute the protocol \textsc{PFonPath()} on $\mathcal{L}$ to achieve $\mathcal{C''} = \mathcal{C''}_{target}$, i.e., $C_7 =$ true.

 \begin{figure}[h]
\centering
    \fontsize{8pt}{8pt}\selectfont
    \def\svgwidth{0.5\textwidth}
    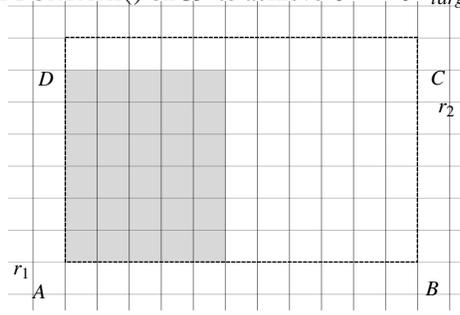

\caption{A configuration in phase 4.}
\label{phase4}
\end{figure}

\begin{theorem}
 If we have an asymmetric configuration $\mathcal{C}$ at some time $t$, with $C_3 \wedge C_4 \wedge C_5 \wedge C_6 \wedge \lnot C_7 =$ true, then 
 
 \begin{enumerate}
    
  \item after any move by an interior robot according to \textsc{PFonPath()}, the new configuration is still asymmetric and the coordinate system is unchanged.
  
  \item  after any move by an interior robot according to \textsc{PFonPath()}, we still have $C_3 \wedge C_4 \wedge C_5 \wedge C_6$ $=$ true.
  
  \item after finite number of moves by the interior robots, we shall have $C_7$ $=$ true.
  
  \item after finite number of moves by the interior robots, Phase 4 completes with $C_3 \wedge C_4 \wedge C_5 \wedge C_6 \wedge C_7$ $=$ true.
 \end{enumerate}

\end{theorem}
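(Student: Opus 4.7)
The plan is to treat items (1) and (2) as invariants that let the path-subroutine of Theorem~\ref{lineth} run safely inside $\mathcal{F}$, and then to harvest (3) and (4) as almost immediate corollaries. Before doing any string comparison I would pin down the geometry: by $C_5$ the head is at $A=(0,0)$; by $C_4$ the smallest enclosing rectangle of $\mathcal{C}'$ has horizontal extent $H\le\lfloor n/2\rfloor$ and, since $A$ is its bottom-left corner, $\mathcal{C}'$ lies in columns $0,\dots,\lfloor n/2\rfloor-1$, so the $x=n-1$ extreme of $\mathcal{R}$ must be supplied by the tail; by $C_6$ the vertical extent is $V\le m-1$, so the tail must also sit in row $m-1$. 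Hence the tail is pinned at $C=(n-1,m-1)$, both endpoints are static throughout Phase~4, and since interior robots only move inside $\mathcal{F}$ the rectangle $\mathcal{R}=ABCD$ never changes (and remains non-square, because $n\ge m+2$, so only the four corner-strings need to be compared).

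For part (1) I would first discard $\lambda_{BC}$ and $\lambda_{DA}$ at the first character, using that $B=(n-1,0)$ and $D=(0,m-1)$ are both empty (all robots other than the tail live in $\mathcal{F}$, and $\mathcal{F}$ excludes the top row), so $\lambda_{BC}[1]=\lambda_{DA}[1]=0<1=\lambda_{AD}[1]$. The delicate case is $\lambda_{CB}$, which also begins with a $1$. Here I would argue positionally: no robot besides the tail occupies any column with $x\ge\lfloor n/2\rfloor$, so after its leading $1$ the string $\lambda_{CB}$ has $0$'s for roughly the first $m\lceil n/2\rceil$ further positions, whereas at least one interior robot exists (because $|\mathcal{C}''|=|\mathcal{C}''_{target}|$ together with $\lnot C_7$ forces $\mathcal{C}''\neq\emptyset$) and so produces a $1$ in $\lambda_{AD}$ within its first $m\lfloor n/2\rfloor$ positions. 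This gives $\lambda_{AD}>\lambda_{CB}$, so the leading corner and the lex-largest string are preserved; the configuration therefore remains asymmetric and its derived coordinate system is unchanged.

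For (2) I would check each conjunct directly. $C_5$ is immediate since the head is static; $C_3$ holds because $m,n,M$ are all untouched. For $C_4$ and $C_6$ the point is that $H$ and $V$ cannot grow beyond $\lfloor n/2\rfloor$ and $m-1$ respectively while interior robots stay in $\mathcal{F}$, so $n\ge 2H$ and $m\ge V+1$ persist; the remaining inequalities $n\ge 2N$ and $m\ge M+1$ are inherited unchanged from the start of the phase.

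For (3) I would invoke the construction of Section~\ref{sec3}: view the $(m-1)\times\lfloor n/2\rfloor$ subgrid $\mathcal{F}$ as a coiled finite path $\mathcal{L}$, with left/right on $\mathcal{L}$ fixed by the invariant coordinate system of part~(1); note that $\mathcal{C}''_{target}\subseteq\mathcal{F}$ because $C_4$ gives $N\le\lfloor n/2\rfloor$ and $C_6$ gives $M\le m-1$. Running \textsc{PFonPath()} on $\mathcal{L}$ with targets induced by $\mathcal{C}''_{target}$ then terminates in finitely many moves with $\mathcal{C}''=\mathcal{C}''_{target}$, i.e.\ $C_7$, by Theorem~\ref{lineth}. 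Item~(4) is then the conjunction of (2) and (3). The main obstacle I anticipate is the bit-by-bit argument that $\lambda_{AD}>\lambda_{CB}$ despite both strings starting with a $1$; once that is settled, the rest is bookkeeping combined with the already-proved correctness of \textsc{PFonPath()}.
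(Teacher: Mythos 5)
Your proposal is correct and takes essentially the same route as the paper's proof: pin the head at $A$ and the tail at $C$ using $C_4 \wedge C_5 \wedge C_6$, note the rectangle (and hence the four corner strings) is unchanged, dispose of $\lambda_{BC}$ and $\lambda_{DA}$ by the empty corners $B, D$, settle the one delicate comparison $\lambda_{AD} > \lambda_{CB}$ via $C_4$ confining $\mathcal{C}'$ to the left half, and then invoke Theorem \ref{lineth} on the coiled path in $\mathcal{F}$. You merely supply the quantitative details (the zero-run length in $\lambda_{CB}$, the nonemptiness of $\mathcal{C}''$, and $\mathcal{C}''_{target} \subseteq \mathcal{F}$) that the paper compresses into ``the rest are easily seen.''
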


\begin{proof}
 It is easy to see that $C_3 \wedge C_4 \wedge C_5 \wedge C_6$ is true after any move. Let $\mathcal{R}^{old} = \mathcal{R}^{new} = ABCD$. Let $\lambda_{AD}^{old}$ be the lexicographically largest string in $\mathcal{R}^{old}$. $C_5 =$ true implies that $A$ is occupied. $C_4 \wedge C_6 =$ true implies that $B, D$ are empty and $C$ is occupied by the tail. Hence, $\lambda_{AD}^{new} > \lambda_{BC}^{new}$ and $\lambda_{AD}^{new} > \lambda_{DA}^{new}$. Also, it implies from $C_4 =$ true that $\lambda_{AD}^{new} > \lambda_{CB}^{new}$. The rest are easily seen. \qed
\end{proof}

\subsection{Phase 5}

The algorithm is in phase 5, if $\lnot C_2 \wedge C_3 \wedge C_4 \wedge C_5  \wedge C_7$ is true. In this phase, the tail will move along the vertical grid line in order to make $C_2$ true. 

If we have an asymmetric configuration $\mathcal{C}$ which is in phase 5, then depending on whether $C_8$ is true or false, there are two cases to consider. 

Let the smallest enclosing rectangle of $\mathcal{C}$ be $\mathcal{R} = ABCD$, with $|AD| = m$ and $|AB| = n$, $n > m$. Let $A$ be the leading corner, and hence $X$-axis $= \overrightarrow{AB}$ and $Y$-axis $= \overrightarrow{AD}$. Now, let us plot the points of $\mathcal{C}_{target}$ in this coordinate system. Except $h_{target}$ and $t_{target}$, all other points of $\mathcal{C}_{target}$ are occupied by the robots. Let the smallest enclosing rectangle of $\mathcal{C}'$ be $\mathcal{R}' = AB'C'D'$ (See Fig. \ref{phase5}). Hence, the tail is currently on $BC$, and all the remaining robots are inside the region $AB'C'D'$.

\textbf{Case-1} First, assume that $C_8$ is true. Since the head is at $A$, $D'$ is also occupied due to the symmetry.  Let $C''$ be the grid point where the grid lines $\overleftrightarrow{D'C'}$ and $\overleftrightarrow{BC}$ intersect. Let $C'''$ be the middle point of $BC''$. $C'''$ is a grid point if $|BC''|$ is odd. If the tail is on $BC''$, then $C = C''$ and $D = D'$. Note that the tail can not be on $[C''',C'']$, because then we shall have $\lambda_{DA} \geq \lambda_{AD}$. Hence, the tail is on $[B,C''')$ or $(C'',\infty)$.

In this phase, we want to make $C_2$ true. This means that the tail needs to go to the grid point on $\overleftrightarrow{BC''}$ that is on the same horizontal line with  $t_{target}$. Let us call this point $\tilde{t}_{target}$. Consider the case where $\tilde{t}_{target} \in [B,C'']$. In this case, the upper left corner of the smallest enclosing rectangle of $\mathcal{C}_{target}$ is $D'$, which is occupied by a robot. Since the input is given in canonical coordinates, the bottom left corner (origin) of the smallest enclosing rectangle of $\mathcal{C}_{target}$, i.e., $A$, must be the leading corner. Therefore, $A$ must be occupied in the final configuration. Since $A$ is already occupied, it implies that $C_1$ is currently true. Also note that $\tilde{t}_{target} \notin (C''',C'']$, as $A$ is the leading corner in $\mathcal{C}_{target}$. Hence, $\tilde{t}_{target}$ is on $[B,C''']$ or $(C'',\infty)$.


\underline{\textbf{Case 1A: tail} $\pmb{\in [B,C''')}$ \textbf{and}  $\pmb{\tilde{t}_{target} \in [B,C''']}$}

The tail will move towards $\tilde{t}_{target}$. During the movements, the coordinate system remains invariant. However, if $\tilde{t}_{target}$ is at $C'''$, a horizontal symmetry will be created when it reaches $\tilde{t}_{target}$.

\underline{\textbf{Case 1B: tail} $\pmb{\in (C'',\infty)}$ \textbf{and}  $\pmb{\tilde{t}_{target} \in (C'',\infty)}$}

The tail will move towards $\tilde{t}_{target}$. Again it is easy to see that the coordinate system remains invariant during the movements.

\underline{\textbf{Case 1C: tail} $\pmb{\in (C'',\infty)}$ \textbf{and}  $\pmb{\tilde{t}_{target} \in [B,C''']}$}

In this case, the tail will move downwards. When $r$ reaches $C''$, the coordinate system flips. The new coordinate system has origin at $D'$, $X$-axis $= \overrightarrow{D'C''}$ and $Y$-axis $= \overrightarrow{D'A}$. In the new coordinate system, $r$ requires to place itself in $[C'',C''']$. Hence, the case is reduced to the situation similar to case 1A. Thus $r$ achieves $C_2 =$ true without going beneath $C'''$.

\underline{\textbf{Case 1D: tail} $\pmb{\in [B,C''')}$ \textbf{and}  $\pmb{\tilde{t}_{target} \in (C'',\infty)}$}

In this case, the tail will move downwards. When $r$ goes beneath $B$, the coordinate system flips. The new coordinate system has origin at $D'$, $X$-axis $= \overrightarrow{D'C''}$ and $Y$-axis $= \overrightarrow{D'A}$. Clearly, the case is reduced to the situation similar to case 1B.

%


\textbf{Case-2} Now assume that $C_8$ is false. It is easy to see that where ever $\tilde{t}_{target}$ is on $\overleftrightarrow{BC}$, the binary string attached to $A$ is lexicographically strictly largest as the tail moves towards it. Hence, the movement of the tail in phase 5 does not change the coordinate system. Clearly, after finite number of moves, $C_2$ becomes true. Then phase 5 is completed with  $C_2 \wedge C_3 \wedge C_4 \wedge C_5 \wedge C_7$ true.

 \begin{figure}[h]
\centering
    \fontsize{8pt}{8pt}\selectfont
    \def\svgwidth{0.45\textwidth}
    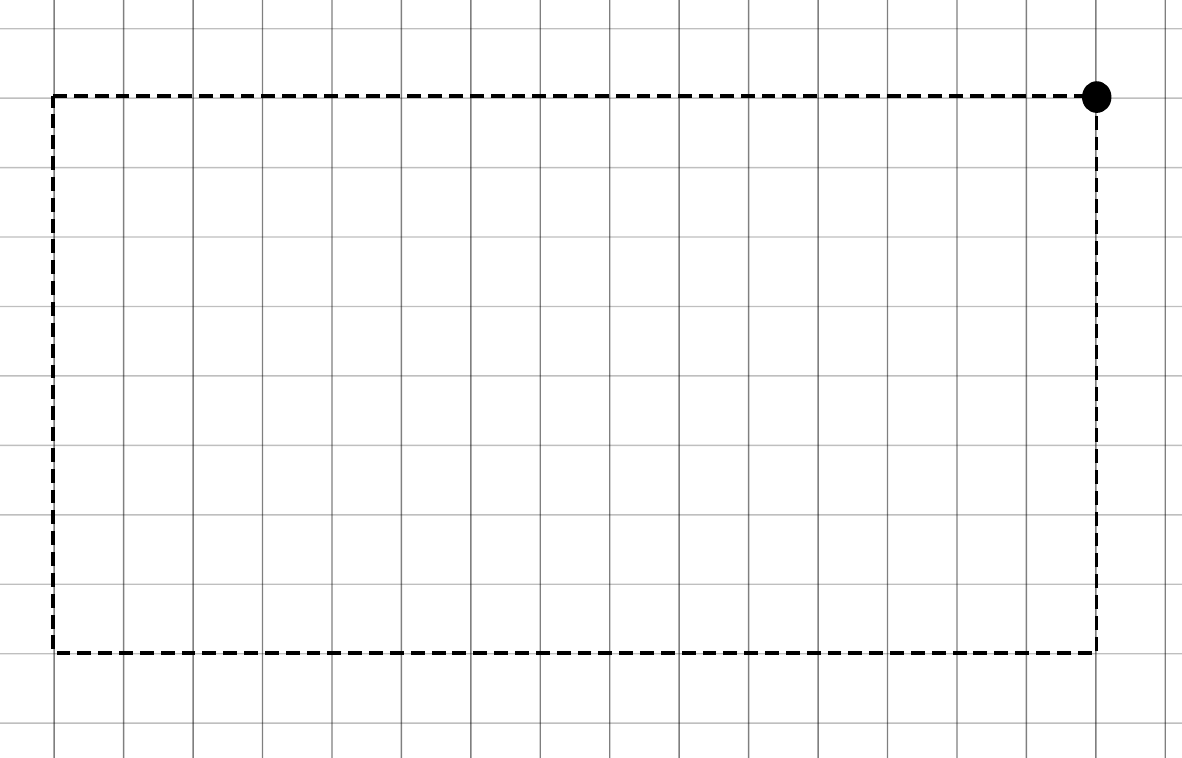

\caption{Illustration of case 1 of phase 5.}
\label{phase5}
\end{figure}

%

\begin{theorem}
 If we have an asymmetric configuration $\mathcal{C}$ in phase 5 at some time $t$, then after finite number of steps phase 5 completes with $C_2 \wedge C_3 \wedge C_4 \wedge C_5 \wedge C_7 =$ true.
 
%
%
%
%

\end{theorem}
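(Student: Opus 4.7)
The plan is to handle separately the case $C_8=$ false and the four sub-cases 1A--1D under $C_8=$ true that were already identified in the preceding exposition. In every sub-case the only robot that moves is the tail, along a vertical grid line, so the invariants $C_5$ and $C_7$ (which depend solely on the head and the interior robots) are trivially preserved. Because $n$ and the fixed target quantities $N$ and $H$ are unaffected by tail movement, $C_4$ is preserved as well. The real content of the proof is therefore to show that (a) the configuration remains asymmetric after every step, (b) the coordinate system evolves only in the specified ways (invariant, or flipped exactly as described), (c) $C_3$ is preserved, and (d) the tail reaches $\tilde{t}_{target}$ in finitely many moves.

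For $C_8=$ false, I would argue by lexicographic comparison. Because $\mathcal{C}'$ has no horizontal reflectional symmetry and $C_4$ gives $n\ge 2H$, scanning the columns containing $\mathcal{C}'$ already forces the strict inequality $\lambda_{AD}>\lambda_{DA}$ irrespective of where the tail sits on $\overleftrightarrow{BC}$. The same consideration, together with $C_5$, yields $\lambda_{AD}>\lambda_{BC}$ and $\lambda_{AD}>\lambda_{CB}$. Hence the leading corner stays at $A$, the coordinate system is invariant, and the tail can walk monotonically along $\overleftrightarrow{BC}$ to $\tilde{t}_{target}$ in finitely many steps, completing the phase with $C_2$ true.

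For $C_8=$ true, cases 1A and 1B are the base cases: the tail moves within $[B,C''')$ or within $(C'',\infty)$ respectively, and the forced occupation of $D'$ together with $A$ preserves $\lambda_{AD}>\lambda_{DA}$, since the portion of the string corresponding to $\mathcal{C}'$ is unchanged and any tail position on the upper half of $\overleftrightarrow{BC}$ contributes an asymmetry-breaking $1$ strictly before the symmetric midpoint $C'''$ in $\lambda_{AD}$. The boundary situation $\tilde{t}_{target}=C'''$ in 1A is permissible because, although the final configuration then acquires a horizontal reflection symmetry, this coincides with the prescribed target placement of the tail, which is precisely $C_2$. Cases 1C and 1D reduce to 1B and 1A respectively: when the tail crosses the axis of symmetry of $\mathcal{C}'$ (at $C''$ in 1C, and below $B$ in 1D), the leading corner transfers to $D'$ exactly as described in the text, and the already-established claims for 1A and 1B can be invoked in the flipped coordinate system to finish.

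The main obstacle is verifying $C_3$ throughout, since the vertical side $m$ of $\mathcal{R}$ may grow as the tail walks above $A$ (or below $D$ after a flip). I would bound this by observing that $\tilde{t}_{target}$ has $Y$-coordinate of absolute value at most $M-1$ in the canonical target system, so $m$ never exceeds $\max\{m_{\text{initial}},M\}$ during the phase. Combined with the standing hypothesis $C_4$ at the start of the phase and the inequality $N\ge M$, this yields $n\ge 2N\ge M+2$ in all non-degenerate situations, so $n\ge \max\{M,m\}+2$ holds at every intermediate configuration and $C_3$ is preserved. Termination is then immediate, as the tail advances by one unit at each step toward a fixed grid point, so it reaches $\tilde{t}_{target}$ after finitely many moves, at which point $C_2$ becomes true and phase 5 terminates in the claimed state $C_2\wedge C_3\wedge C_4\wedge C_5\wedge C_7$.
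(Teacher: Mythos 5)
Your proposal follows the paper's own route exactly: the paper proves this theorem by the case analysis immediately preceding it (a direct lexicographic argument when $C_8$ is false, and the four cases 1A--1D with coordinate flips when $C_8$ is true), and your decomposition is the same. Your explicit verification that $C_3$ survives the tail's vertical walk --- bounding $m \leq \max\{m_{\text{initial}}, M\}$ so that $n \geq \max\{M,m\}+2$ holds at every intermediate configuration --- is a detail the paper leaves implicit, and your bound is correct.

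However, two of your orientation claims are inverted and, as written, false. First, in case 1A the tail keeps $\lambda_{AD}$ strictly largest precisely because it lies in the \emph{lower} segment $[B,C''')$: in the bottom-up scan $\lambda_{AD}$, a tail at distance $d$ above $B$ contributes its $1$ at position $m(n-1)+d+1$, whereas in $\lambda_{DA}$ it appears at position $m(n-1)+m-d$, so strictness requires $d+1 < m-d$, i.e., the tail strictly below the midpoint $C'''$. Your statement that a tail ``on the upper half'' contributes an asymmetry-breaking $1$ before the midpoint asserts the opposite; indeed the paper notes that a tail on $[C''',C'']$ forces $\lambda_{DA} \geq \lambda_{AD}$. (In case 1B the strictness has a different source: with the tail strictly above $C''$, the top-left corner of $\mathcal{R}$ is empty while the head occupies $A$, so $\lambda_{AD} > \lambda_{DA}$ already at the first position.) Second, your reductions are swapped: after the flip at $C''$, case 1C lands in the situation of case 1A (tail at the new bottom-right corner, target in the new $[B,C''']$), and case 1D, once the tail passes below $B$, lands in case 1B --- not ``1B and 1A respectively'' as you wrote. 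Relatedly, the flip points ($C''$ in 1C, below $B$ in 1D) are where the corner structure of the enclosing rectangle changes, not crossings of the axis of symmetry of $\mathcal{C}'$; that axis meets $\overleftrightarrow{BC}$ at $C'''$, which the tail never crosses in 1C. These are local inversions rather than a structural failure --- both base cases are available, so the argument closes once the labels are corrected --- but in a proof whose entire content is lexicographic orientation, they must be fixed for the individual steps to be true.
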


%
%
%
%
%

\subsection{Phase 6}

If we have $\lnot C_1 \wedge C_2 \wedge C_3 \wedge C_4 \wedge C_7 =$ true, then the algorithm is in phase 6. In this case, the head will move towards $h_{target}$. 

 \begin{figure}[h]
\centering
    \fontsize{8pt}{8pt}\selectfont
    \def\svgwidth{0.45\textwidth}
    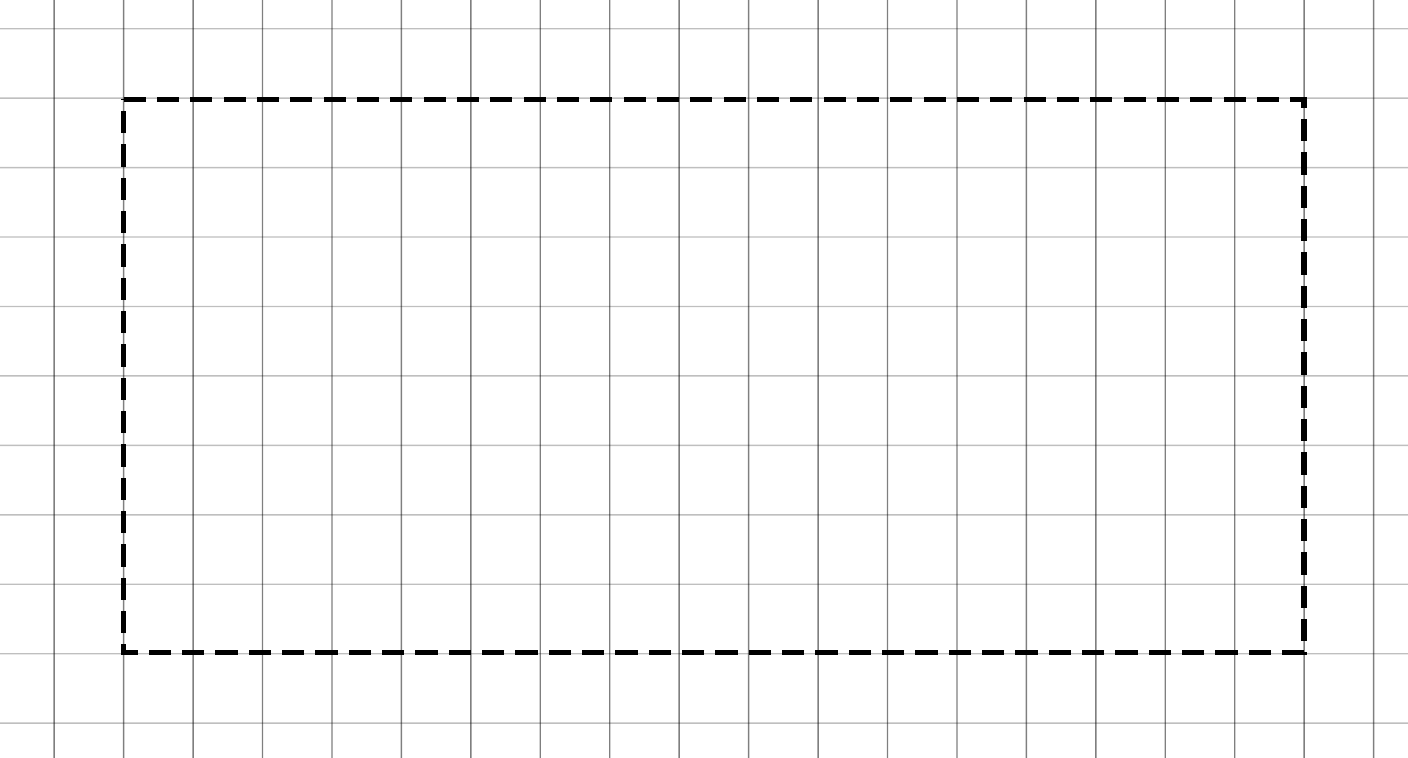

\caption{Illustration of phase 6.}
\label{phasefinal}
\end{figure}

Consider an asymmetric configuration $\mathcal{C}$ which is in phase 6. Let $ABCD$ be the smallest enclosing rectangle, with $\lambda_{AD}^{}$ being the (strictly) largest string. Let $H$ and $T$ be the position of the head and the tail respectively. $H$ and $T$ are clearly on $AD$ and $BC$ respectively. Plot the points of $\mathcal{C}_{target}$ on the grid with respect to the current coordinate system ($X$-axis $= \overrightarrow{AB}$ and $Y$-axis $= \overrightarrow{AD}$). The smallest enclosing rectangle of these points is $AB'C'D$ (See Fig. \ref{phasefinal}). Let $H'$ and $T'$ be the points $h_{target}$ and $t_{target}$. Therefore, if the head moves from $H$ to $H'$ and the tail moves from $T$ to $T'$, then the given pattern is formed. $H'$ and $T'$ are clearly on $AD$ and $B'C'$ respectively, with $T$ and $T'$ being on the same horizontal line.

The aim of this phase is to move the head from $H$ to $H'$. Let $\mathcal{C}_{h}$ be the configuration obtained from $\mathcal{C}$, if the the head moves from $H$ to $H'$. Since the input is given in canonical coordinates, $\lambda_{AD}^{}$ is the lexicographically largest string in $\mathcal{C}_{target}$. This implies that $\lambda_{AD}^{}$ is lexicographically largest in $\mathcal{C}_{h}$. Recall that, $\lambda_{AD}^{}$ may not be strictly largest, as $\mathcal{C}_{target}$ may have some symmetries. In particular, if $\mathcal{C}_{target}$ has a reflectional symmetry with respect to a horizontal axis, then $\mathcal{C}_{h}$ also has the same. Then $\lambda_{AD}^{}$ and $\lambda_{DA}^{}$ are both lexicographically largest in $\mathcal{C}_{h}$. In any case, as the head moves from $H$ to $H'$, $\lambda_{AD}^{}$ remains lexicographically strictly largest until it reaches $H'$.

\begin{theorem}
 Let $\mathcal{C}$ be an asymmetric configuration in phase 6. Then after finite number of moves by the head, phase 6 completes with $C_1 \wedge C_2 \wedge C_3 \wedge C_4 =$ true, and hence, $\lnot C_0 \wedge C_1 \wedge C_2 =$ true. The final configuration has a horizontal reflectional symmetry, if $\mathcal{C}_{target}$ also has the same.

\end{theorem}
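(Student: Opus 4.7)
The plan is to handle the four assertions packed into the theorem by tracking the head's single-unit upward steps through phase 6.

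First, I would verify that each step by the head preserves asymmetry and leaves the coordinate system unchanged. The paragraph immediately preceding the theorem already makes the main observation: once the head arrives at $H' = h_{target}$, the resulting configuration $\mathcal{C}_h$ coincides with $\mathcal{C}_{target}$ except that the tail lies at $(n-1, y_{T'})$ rather than at $T' = t_{target}$. Because the input is in canonical coordinates and the wide horizontal gap guaranteed by $C_3 \wedge C_4$ pushes the tail's contribution in each boundary string into late positions, the displacement does not disturb the lex-maximality of $\lambda_{AD}$ (the four or, if $\mathcal{R}$ is a square, eight competing boundary strings all remain bounded above). For an intermediate head position $(0,y)$ with $0 \leq y < y_{H'}$, $\lambda_{AD}$ acquires its first $1$ strictly earlier than $\lambda_{AD}(\mathcal{C}_{target})$, while the other boundary strings can only lose $1$'s in their prefixes, so $\lambda_{AD}$ stays strictly lex-largest. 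Hence $A$ remains the unique leading corner and the coordinate system is preserved.

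Second, I would confirm the Boolean invariants. $C_2$ and $C_7$ hold trivially since neither the tail nor any interior robot moves. For $C_3 = (n \geq \max(M,m)+2)$ and $C_4 = (n \geq 2\max(N,H))$, the width $n$ is pinned by column $0$ and the tail's column $n-1$; the height $m$ is controlled by the topmost and bottommost occupied rows, which stay inside $[0,M-1]$ because the head's trajectory is $\{(0,y) : 0 \leq y \leq y_{H'} \leq M-1\}$, the interior robots sit in $\mathcal{R}_{target}$ of height $M$, and the tail's $y$ matches $y_{t_{target}} \leq M-1$; the width $H$ of $\mathcal{R}'$ is bounded by $N$ since all non-tail robots lie in $\mathcal{R}_{target}$ of width $N$. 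Thus $\max(M,m)=M$ and $\max(N,H)=N$ throughout, so $C_3, C_4$ reduce to the constant conditions $n \geq M+2$ and $n \geq 2N$, which hold at the start of phase 6. After a finite number of upward moves the head reaches $H'$, at which point the head and interior robots realize exactly $\mathcal{C}'_{target}$, making $C_1$ true and completing the first three assertions. The condition $\lnot C_0$ then follows because the tail has $x$-coordinate $n-1$ whereas $t_{target}$ has $x \leq N-1 < n-1$ by $C_4$, so $\mathcal{C} \neq \mathcal{C}_{target}$.

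Finally, for the symmetry claim, I would argue that if $\mathcal{C}_{target}$ is invariant under reflection about a horizontal axis $L$, then so is $\mathcal{C}_h$. The positions occupied by the head and interior robots realize $\mathcal{C}'_{target}$; I would show that $\mathcal{C}'_{target}$ is $L$-symmetric by observing that in the canonical-coordinate setup for a horizontally symmetric target, $t_{target}$ is forced to lie on $L$ in every scenario where phase 6 is non-trivially entered (i.e., $h_{target} \neq A$), because the tie $\lambda_{AD} = \lambda_{DA}$ combined with the scanning order of the rightmost column pairs $t_{target}$ with itself. The tail's own position $(n-1, y_{T'})$ then sits on $L$ as well, since $C_2$ pins $y_{T'}$ to the axis. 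The main obstacle I expect is this last lexicographic point: relating the canonical scanning that defines $t_{target}$ to the geometric axis of symmetry requires a case analysis by parity of $M$ and by the structure of $\mathcal{C}_{target}$'s rightmost column, and is the delicate heart of the proof.
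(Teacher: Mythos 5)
Your treatment of the head's movement and the Boolean bookkeeping follows essentially the same route as the paper's (prose) argument: plot $\mathcal{C}_{target}$ in the current coordinate system, use canonicity of the input to conclude that $\lambda_{AD}$ is lexicographically largest in $\mathcal{C}_h$, and argue that at intermediate head positions $\lambda_{AD}$ is strictly largest, so the coordinate system never changes and $C_2, C_3, C_4, C_7$ are preserved until $C_1$ becomes true. Two small imprecisions there: phase 6 is defined by $\lnot C_1 \wedge C_2 \wedge C_3 \wedge C_4 \wedge C_7$ and does \emph{not} include $C_5$, so an initial configuration can have the head on $AD$ strictly \emph{above} $h_{target}$, moving downwards; your analysis covers only the ascending case $0 \leq y < y_{H'}$, and the descending case needs a separate (easy but different) comparison, since there $\lambda_{AD}$ at intermediate positions is \emph{smaller} than $\lambda_{AD}(\mathcal{C}_h)$. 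Also, "the other boundary strings can only lose $1$'s in their prefixes" is not literally true for $\lambda_{BC}$ and $\lambda_{CB}$: the head lies in their \emph{last}-scanned column, so its $1$ can move earlier within that final block; one must argue the strict dominance of $\lambda_{AD}$ is decided before that block. These are at roughly the same level of informality as the paper itself.

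The genuine gap is in the symmetry claim, and you correctly sensed it was the delicate point — but your proposed lemma is false. It is \emph{not} true that $t_{target}$ must lie on the axis $L$ when $\mathcal{C}_{target}$ is horizontally symmetric, even under your proviso $h_{target} \neq A$. Take the $5 \times 6$ target $\{(0,1),(0,2),(0,3),(1,2),(2,0),(2,4),(5,1),(5,3)\}$: it is symmetric about $y=2$; every column is a vertical palindrome, so $\lambda_{AD} = \lambda_{DA} = 01110\,00100\,10001\,00000\,00000\,01010$, which strictly dominates $\lambda_{BC} = \lambda_{CB}$ (these begin $01010$); hence these are the canonical strings, $h_{target} = (0,1) \neq A$, yet $t_{target} = (5,3)$ lies strictly off the axis. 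In general $t_{target}$ is the topmost occupied cell of the last column, and whenever that column contains two or more robots of a horizontally symmetric pattern, $t_{target}$ is paired with its mirror image, not with itself. Consequently your plan cannot close — and in fact the theorem's last sentence (and the paper's preceding assertion that horizontal symmetry of $\mathcal{C}_{target}$ transfers to $\mathcal{C}_h$) holds only in the sub-case where $t_{target}$ is on the axis; in the example above, $\mathcal{C}_h = \mathcal{C}_{target} \setminus \{(5,3)\} \cup \{(n-1,3)\}$ is asymmetric. What the algorithm actually needs for phase 7, and what is provable, is the converse direction: since the tail is the unique robot in the rightmost column of $\mathcal{C}_h$ and $n \geq 2N$ leaves a long empty run, the only possible equality among the boundary strings of $\mathcal{C}_h$ is $\lambda_{AD} = \lambda_{DA}$, i.e., the only residual symmetry of the final configuration is a horizontal reflection, which forces the tail (hence $t_{target}$, by $C_2$) onto the axis and $\mathcal{C}_{target}$ to be symmetric. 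You should prove this "only if" statement in place of the parity case analysis you anticipate.
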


\subsection{Phase 7}

Finally, the algorithm is in phase 7 if we have $\lnot C_0 \wedge C_1 \wedge C_2 =$ true. Suppose that the tail is at $T'$. The tail will move horizontally towards $T = t_{target}$. Note that a configuration $\mathcal{C}$ in phase 7 can have a reflectional symmetry with respect to a horizontal axis. Let $\lambda_{AD}$ be a largest string in the configuration. Since $\lambda_{AD}$ is also a largest string in the final configuration, it is easy to see that, when the tail is moving from $T'$ to $T$, $\lambda_{AD}$ remains the largest string (may be jointly with $\lambda_{DA}^{}$) until it reaches $T$. 


\begin{theorem}
 If we have a configuration $\mathcal{C}$ at some time $t$, with $\lnot C_0 \wedge C_1 \wedge C_2 =$ true, then after finite number of moves by the tail, $C_0$ becomes true.
 
\end{theorem}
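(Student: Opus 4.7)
Let $T' = (x_{T'}, y_T)$ denote the current position of the tail and $T = t_{target} = (x_T, y_T)$ its target; by $C_2$ these share a $Y$-coordinate. The strategy is to move the tail one horizontal step leftward per activation, and to verify three things: that $T$ lies strictly to the left of $T'$ so ``leftward'' is well defined, that the tail's horizontal path is unobstructed, and that each Look by the tail returns the same direction until $T$ is reached.

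For the geometry, $C_1$ ensures that every non-tail robot occupies a point of $\mathcal{C}'_{target}$ and therefore has $X$-coordinate at most $x_T$. The tail is by definition the topmost robot in the rightmost occupied column of $\mathcal{C}$, so $x_{T'}$ is the maximum $X$-coordinate appearing in $\mathcal{C}$. Condition $C_4$, which still holds at the end of phase 6, forces $n \geq 2N$ and hence $x_{T'} > x_T$, so the tail sits strictly to the right of $T$ and the rightmost column of $\mathcal{C}$ is occupied only by the tail. Every intermediate grid point $(x, y_T)$ with $x_T < x < x_{T'}$ satisfies $x > x_T$ and is therefore unoccupied, and the endpoint $T \notin \mathcal{C}'_{target}$ is empty as well, so collisions are impossible.

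For the direction-agreement step I would invoke the observation already stated in the paragraph preceding the theorem: $\lambda_{AD}$ remains a largest string of the current configuration, possibly jointly with $\lambda_{DA}$, throughout the whole phase. In the strictly-largest case the canonical coordinate system is unchanged and $T$ has a fixed position in the tail's local frame. The delicate situation, and the main obstacle of the proof, is the tied case, in which the current configuration has a horizontal reflectional symmetry. Because the rightmost column of $\mathcal{C}$ is occupied only by the tail, such a symmetry can exist only if the tail itself lies on the axis; hence $y_T$ is the axis coordinate, a single horizontal step keeps the tail on the axis, and both the symmetry and the tied equality $\lambda_{AD} = \lambda_{DA}$ are preserved. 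Moreover the two candidate leading corners $A$ and $D$ induce the same positive $X$-direction, so $x_T$ is read off unambiguously and ``step toward $T$'' is a well-defined local computation.

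The argument concludes by noting that each activation strictly decreases $x_{T'} - x_T$ by one, so after exactly $x_{T'} - x_T$ moves the tail occupies $T$, giving $\mathcal{C} = \mathcal{C}_{target}$ and hence $C_0$. The principal technical difficulty is preserving the reference frame in the horizontally symmetric case, and this is handled by the observation that the tail is pinned to the axis.
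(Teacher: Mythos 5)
Your overall strategy is the same as the paper's (the paper itself only sketches this proof in the paragraph preceding the theorem): move the tail horizontally toward $t_{target}$, argue that $\lambda_{AD}$ remains a largest string throughout — possibly tied with $\lambda_{DA}$ when there is a horizontal reflectional symmetry — and observe that in the tied case the ambiguity between the two candidate frames is harmless because both induce the same positive $X$-direction. Your treatment of the tied case (tail pinned to the axis, so the symmetry and the tie are preserved by horizontal steps) is a reasonable fleshing-out of what the paper leaves implicit. However, there is a genuine gap at the start of your argument: you claim $x_{T'} > x_T$ by invoking ``$C_4$, which still holds at the end of phase 6.'' This is illegitimate in this model. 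The robots are oblivious, so phase membership must be decided from the current snapshot alone, and phase 7 is defined solely by $\lnot C_0 \wedge C_1 \wedge C_2$; moreover the paper explicitly allows the \emph{initial} configuration to lie in any phase. An initial configuration can satisfy $\lnot C_0 \wedge C_1 \wedge C_2$ without $C_4$, and in it the tail can sit strictly to the \emph{left} of $t_{target}$: for instance, when $t_{target}$ is the unique point of $\mathcal{C}_{target}$ in the rightmost column of $\mathcal{R}_{target}$, the tail (being the last 1, hence in the rightmost occupied column) may occupy any column between the rightmost column of $\mathcal{C}'_{target}$ and $x_T$. So ``move left'' is not always the correct instruction, and your collision analysis, which only inspects points $(x, y_T)$ with $x_T < x < x_{T'}$, covers only one of the two cases.

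The gap propagates: your claim that any horizontal symmetry must have its axis through the tail rests on ``the rightmost column of $\mathcal{C}$ is occupied only by the tail,'' which you again derived from $C_4$; without $C_4$ other robots of $\mathcal{C}'$ may share the tail's column, and that step needs a different justification (e.g., via the structure forced by $C_1 \wedge C_2$, or by noting that symmetric configurations in phase 7 only arise along the algorithm's own trajectory, where $C_4$ does hold — but then you must say so and also handle asymmetric initial phase-7 configurations separately). The repair for the direction issue is straightforward and worth making explicit: since the tail is the last 1, all non-tail robots lie in columns at most $x_{T'}$, and by $C_1$ they occupy $\mathcal{C}'_{target}$, whose points have $X$-coordinate at most $x_T$ (with equality only strictly below $t_{target}$); hence the open horizontal segment between $T'$ and $T$ at row $y_T$ is empty \emph{in either direction}, $T$ itself is empty, and $|x_{T'} - x_T|$ decreases by one per activation, giving termination. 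With these corrections your proof matches the paper's intended argument.
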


It is not difficult to verify (See Appendix) that any configuration with $C_0 =$ false, belongs to one of the seven phases that we have discussed. From the results we have proved, it follows that starting from any asymmetric configuration, our algorithm can form any given pattern in finite time (See Appendix for a phase transition diagram of our proposed algorithm). Hence, we can conclude the following theorem.

\begin{theorem}
 \textsc{Arbitrary Pattern Formation} is solvable in ASYNC from any asymmetric initial configuration.
\end{theorem}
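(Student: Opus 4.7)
The proof is a straightforward composition of the per-phase results established in Sections~4.1--4.7, so the plan is to package them into a single monotone-progress argument. First I would verify the completeness claim flagged in the preceding paragraph, namely that every configuration with $C_0 =$ false satisfies the guard of exactly one of the seven phases. This is a finite case analysis on the Boolean vector $(C_1,\dots,C_8)$: starting from the outermost guard $\lnot(C_1 \wedge C_2) \wedge \lnot(C_3 \wedge C_4)$ of Phase~1 and successively refining by the truth values of $C_5$, $C_6$, $C_7$, $C_2$, $C_1$ and $C_8$, one checks that the disjunction of the seven guards covers the complement of $C_0$. The phase transition diagram promised in the appendix makes this bookkeeping precise.

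Next, I would chain the per-phase theorems to obtain a finite ordinal on configurations that strictly decreases with each phase completion. Concretely, each phase theorem has the shape ``after finitely many moves by the designated mover, a strictly larger conjunction of the $C_i$'s becomes true,'' and the resulting configuration either terminates (reaches $C_0$) or matches the guard of a strictly later phase: Phase~1 $\rightarrow$ Phase~2 via $(C_3\wedge C_4)$; Phase~2 $\rightarrow$ Phase~3 or 4 via $C_5$; Phase~3 $\rightarrow$ Phase~4 via $C_6$ (possibly transiting through Phase~1 for a single rightward move, as already analysed in Theorem~\ref{p3}); Phase~4 $\rightarrow$ Phase~5 or 6 via $C_7$; Phase~5 $\rightarrow$ Phase~6 or 7 via $C_2$; Phase~6 $\rightarrow$ Phase~7 via $C_1$; Phase~7 terminates via $C_0$. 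Throughout, the per-phase theorems also record that the configuration remains asymmetric (possibly with a controlled coordinate flip in Phases~3 and~5, and possibly with a horizontal reflectional symmetry emerging only in Phase~6 or~7 when $\mathcal{C}_{target}$ already has that symmetry). Hence the ordinal is well-defined and every descent is by a strictly positive amount.

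The remaining issue is the ASYNC scheduler. Here I would invoke the fact that in each phase only a specific role -- head, tail, or interior robot -- is permitted to move, and each of the per-phase theorems has verified that the common coordinate system (and therefore the identification of head, tail and interior robots) is preserved under every legal move of that phase. Consequently, a robot waking up with a snapshot taken between two earlier moves still classifies itself correctly and still recognises the current phase: either it sees the same phase it would have seen at the most recent quiescent instant, or it sees the successor phase and simply yields to another mover. Since movements are instantaneous in the model (and the remark in Section~\ref{assum} shows the more general case is handled by waiting on grid edges), no pending stale computation can produce a collision; collision-freedom within a single phase is already part of each phase's theorem, and across phase boundaries it follows from the fact that the authorised mover is uniquely determined by the perceived configuration.

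The main obstacle, and the place where I would spend most of the proof-writing effort, is the interaction between the two potential coordinate flips (in Phase~3 Case~2 and in Phase~5 Cases~1C and~1D) and the boundary between phases. I would argue that a flip during Phase~3 (respectively Phase~5) is reducible, within the same phase, to a situation already covered by Case~1 (respectively Case~1A/1B), so no other phase's guard becomes active during the flip. Once this is verified, the chain of implications goes through and the concluding sentence ``starting from any asymmetric configuration, after finitely many Look-Compute-Move cycles the algorithm reaches a configuration with $C_0 =$ true'' is immediate, establishing the theorem. \qed
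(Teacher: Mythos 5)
Your proposal is correct and follows essentially the same route as the paper, whose own proof of this theorem consists precisely of the appendix case-analysis tree showing the seven phase guards partition the configurations with $C_0$ false, the chaining of the per-phase theorems, and the phase transition diagram whose only cycle (Phase 1--Phase 3) is discharged by the argument in Theorem \ref{p3}, exactly as you describe. Your explicit handling of stale snapshots under ASYNC is spelled out in more detail than the paper's terse write-up, but it rests on the same mechanisms the paper uses (a unique authorised mover per phase, invariance of the agreed coordinate system, and collision-freedom of \textsc{PFonPath()}), so it is an elaboration rather than a different approach.
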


\section{Concluding Remarks}\label{conclu}

We have proved that any arbitrary pattern is formable by a set of asynchronous robots if the initial configuration is asymmetric. The immediate course of future research would be to characterize the patterns formable from symmetric configurations. It can be proved that if a configuration $\mathcal{C}$ admits symmetry $\varphi$ such that no robot lies on the axis of reflection or the center of rotation, then any configuration formable from $\mathcal{C}$ necessarily has the same symmetry $\varphi$. This is however not true, if the axis of reflection or the center of rotation contains a robot $r$. The symmetry may be broken by asking the robot $r$ to move. However, this is not straightforward especially in a crowded situation. It would be also interesting to consider randomized algorithms. Another direction of future research would be to extend our work for patterns allowing multiplicities. 

\paragraph{Acknowledgements.} The first three authors are supported by NBHM, DAE, Govt. of India, CSIR, Govt. of India and UGC, Govt. of India respectively. We would like to thank the anonymous reviewers for their valuable
comments which helped us improve the quality and presentation of this paper.

\bibliographystyle{splncs04}
\bibliography{pattern_grid}

\newpage
\appendix

\newpage

\section*{Appendix}

\section{Different Phases of the Main Algorithm}

 From the following tree, it can be easily seen that any configuration with $C_0 =$ false belongs to one of the seven phases described in the algorithm. It is also evident that the phases are mutually disjoint. 
 
 \bigbreak
 
 \bigbreak
 
\begin{forest}
 for tree={l'=0pt,
    draw},forked edges,
[$\lnot C_0$, s sep = 2cm, 
[$\lnot C_0 \wedge C_1\wedge C_2$\\\normalsize (Phase 7), align=center, base=bottom], 
[$\lnot C_0 \wedge \lnot (C_1\wedge C_2)$\\$\Leftrightarrow \lnot (C_1\wedge C_2)$, align=center, base=bottom, s sep = 4cm, calign = last
[$\lnot (C_1\wedge C_2)\wedge C_3\wedge C_4$,  calign = first
[$\lnot (C_1\wedge C_2)\wedge C_3\wedge C_4\wedge C_7$,  calign = first, l = 3.5cm 
[$\lnot C_2\wedge C_3\wedge C_4\wedge C_7$, calign = first, l = .5cm
[$\lnot C_2\wedge C_3\wedge C_4\wedge C_5\wedge C_7$\\\normalsize (Phase 5), align=center, base=bottom]
[$\lnot C_2\wedge C_3\wedge C_4\wedge\lnot C_5\wedge C_7$\\\normalsize (Phase 2), align=center, base=bottom]
]
[$\lnot C_1\wedge C_2\wedge C_3\wedge C_4\wedge C_7$\\\normalsize (Phase 6), align=center, base=bottom],l = 4cm
]
[$\lnot (C_1\wedge C_2)\wedge C_3\wedge C_4\wedge\lnot C_7$\\$\Leftrightarrow C_3\wedge C_4\wedge\lnot C_7$, align=center, base=bottom, calign = last
[$C_3\wedge C_4\wedge C_5\wedge\lnot C_7$
[$C_3\wedge C_4\wedge C_5\wedge C_6\wedge\lnot C_7$\\\normalsize (Phase 4), align=center, base=bottom]
[$C_3\wedge C_4\wedge C_5\wedge\lnot C_6\wedge\lnot C_7$\\\normalsize (Phase 3), align=center, base=bottom]
]
[$C_3\wedge C_4\wedge\lnot C_5\wedge\lnot C_7$\\\normalsize (Phase 2), align=center, base=bottom]
]
]
[$\lnot (C_1\wedge C_2)\wedge\lnot (C_3\wedge C_4)$\\\normalsize (Phase 1), align=center, base=bottom]
]
]
\end{forest}

 \newpage
 
 \section{Phase Transition Diagram of the Main Algorithm}

A general scheme of transition between different phases of the algorithm is shown in the following diagram. Observe that the only cycle in the graph is the one involving phase 1 and phase 3. It has been shown in the proof of Theorem \ref{p3} that this does not create a livelock. Clearly, starting from any phase, the algorithm terminates with $C_0 =$ true.

 \begin{figure}[h] 
\centering
\scalebox{0.8}{ 
\begin{tikzpicture}[shorten >=1pt,node distance=2.5cm,auto]
    \node[state]                    (1)                      {$Phase 1$} ++(-4cm,1cm);;
    \node[state]                    (2) [below of = 1] {$Phase 2$};
    \node[state]                    (3) [below right of = 2]       {$Phase 3$};
    \node[state]                    (4) [below of = 3] {$Phase 4$};
    \node[state]                    (5) [below right of = 4] {$Phase 5$};
    \node[state]                    (6) [below of = 5] {$Phase 6$};
    \node[state]                    (7) [below right of = 6] {$Phase 7$};
    \node[state,accepting]          (F) [below of = 7] {$C_0$};

    \draw[->] (1) edge				node {}	(2)
	      (1) edge	[bend left = 20]	node {}	(3)
	      (1) edge	[bend left = 45]	node {}	(4)
	      (1) edge	[bend left = 45]	node {}	(5)
	      (1) edge	[bend left = 60]	node {}	(6)
	      
	      (2) edge				node {}	(3)
	      (2) edge	[bend right = 20]	node {}	(4)
	      (2) edge	[bend right = 50]	node {}	(5)
	      
	      (3) edge				node {}	(4)
 	      (3) edge	[bend left = 90]	node {}	(1)

	      (4) edge				node {}	(5)
	      
	      (5) edge				node {}	(6)
	      (5) edge	[bend left = 20]	node {}	(7)
	      
	      (6) edge          		node {} (7)
	      
	      (7) edge          		node {} (F);

\end{tikzpicture}}
\end{figure}
 
\end{document}